\numberwithin{equation}{section}
\journal{European Journal of Operational Research}
\newtheorem{theorem}{Theorem}[section]
\newtheorem{definition}[theorem]{Definition}
\newtheorem{lemma}[theorem]{Lemma}
\newtheorem{proposition}[theorem]{Proposition}
\theoremstyle{plain}
\newcommand*\rel@kern[1]{\kern#1\dimexpr\macc@kerna}
\newcommand*\widebar[1]{%
  \begingroup
  \def\mathaccent##1##2{%
    \rel@kern{0.8}%
    \overline{\rel@kern{-0.8}\macc@nucleus\rel@kern{0.2}}%
    \rel@kern{-0.2}%
  }%
  \macc@depth\@ne
  \let\math@bgroup\@empty \let\math@egroup\macc@set@skewchar
  \mathsurround\z@ \frozen@everymath{\mathgroup\macc@group\relax}%
  \macc@set@skewchar\relax
  \let\mathaccentV\macc@nested@a
  \macc@nested@a\relax111{#1}%
  \endgroup
}
\newcommand*{\bigs}[1]{\scalebox{1.2}{\ensuremath#1}}
\def\ps@pprintTitle{%
     \let\@oddhead\@empty
		 \let\@evenhead\@empty
     \def\@oddfoot
       {\hbox to \textwidth%
        {\ifnopreprintline\relax\else
        \@myfooterfont%
         \ifx\@elsarticlemyfooteralign\@elsarticlemyfooteraligncenter%
           \hfil\@elsarticlemyfooter\hfil%
         \else%
         \ifx\@elsarticlemyfooteralign\@elsarticlemyfooteralignleft%
           \@elsarticlemyfooter\hfill{}%
         \else%
         \ifx\@elsarticlemyfooteralign\@elsarticlemyfooteralignright%
           {}\hfill\@elsarticlemyfooter%
         \else%
            \begin{minipage}[t]{\textwidth}   Preprint accepted in \ifx\@journal\@empty%
                 Elsevier%
            \else\@journal\fi\hfill December 23, 2019\\
						Published version available at \rurl{10.1016/j.ejor.2019.12.032}\medskip\\ 
						\begin{minipage}[b]{0.7\textwidth}This work is licensed under a Creative Commons\\ 
			      Attribution-NonCommercial-NoDerivatives\\
						4.0 International License\end{minipage} 
							\hfill\includegraphics[width=0.2\textwidth]{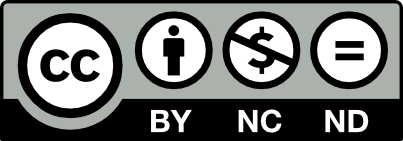}
						\end{minipage}\fi%
         \fi%
         \fi%
         \fi%
         }
       }%
     \let\@evenfoot\@oddfoot}
\newcommand\rurl[1]{%
  \href{http://doi.org/#1}{\nolinkurl{#1}}%
}
\begin{document}
\setlength{\footskip}{30pt}

\begin{frontmatter}

\title{Simple Explicit Formula for Near-Optimal Stochastic Lifestyling}

\author[cass]{Ale\v{s} \v{C}ern\'{y}}\corref{ca}
\cortext[ca]{\textit{Corresponding author email:} \href{mailto:ales.cerny.1@city.ac.uk}{{\ttfamily ales.cerny.1@city.ac.uk}}}
\author[uniba]{Igor Melicher\v{c}\'{\i}k}
\ead{igor.melichercik@fmph.uniba.sk}
\address[cass]{Cass Business School, City, University of London, 106 Bunhill Row, London EC1Y 8TZ, UK}
\address[uniba]{Department of Applied Mathematics and Statistics, Comenius University Bratislava, 84248 Bratislava, Slovakia}

\begin{abstract}
In life-cycle economics, the Samuelson paradigm \citep{samuelson.69} states that the optimal investment is in constant proportions out of lifetime wealth composed of current savings and the present value of future income. It is well known that in the presence of credit constraints this paradigm no longer applies. Instead, optimal life-cycle investment gives rise to so-called stochastic lifestyling \citep{cairns.al.06}, whereby for low levels of accumulated capital it is optimal to invest fully in stocks and then gradually switch to safer assets as the level of savings increases. In stochastic lifestyling not only does the ratio between risky and safe assets change but also the mix of risky assets varies over time. While the existing literature relies on complex numerical algorithms to quantify optimal lifestyling, the present paper provides a simple formula that captures the main essence of the lifestyling effect with remarkable accuracy.
\end{abstract}

\begin{keyword}
finance\sep optimal investment\sep stochastic lifestyling\sep Samuelson paradigm\sep power utility
\MSC[2010] 90C20\sep 90C39\sep 35K55\sep 49J20
\end{keyword}

\end{frontmatter}

\section{Introduction}

Operations research has analysed pension finance from two angles. The first looks at practical methodology for  asset--liability management of a pension scheme as a whole (\citealp{sodhi.05}; \citealp{mulvey.al.08}). The second seeks to characterize the optimal mix of risky and risk-free investments for individual members of a pension scheme as they progress from early working life to retirement (\citealp{cairns.al.06}; \citealp{zhang.ewald.10}). This second stream is informed by and linked to a wider literature on optimal investment and consumption with constraints (\citealp{zariphopoulou.94}; \citealp{vila.zariphopoulou.97}; \citealp{xia.11}; \citealp{nutz.12.mf}; \citealp{kilianova.sevcovic.13}).

In contrast to the considerable mathematical and numerical sophistication needed to arrive at optimal pension portfolios, there is notable absence of portfolio rules that are simple to implement and yet do not compromise welfare of investors. 
The practical need for such rules is significant but this demand has not been met by academia, despite five decades of research. In an isolated contribution, \citet{ayres.nalebuff.13} propose simple heuristic rules for life-cycle portfolio allocation and evaluate their welfare, without analyzing their optimality. This paper offers an insight how one may bridge the gap between optimality and ease of implementation.%
\footnote{Thanks to their tractability our results have been adopted by Allianz in a spreadsheet modeller available to individual pension
account clients in Slovakia.} 

Consider a model with $d$ risky assets whose dynamics are given by the stochastic differential equation (SDE)
\begin{equation}
\frac{dS_{t}}{S_{t}}=\mu dt+\sigma dB_{t},  \label{lognormal}
\end{equation}
where $B$ are $d$ uncorrelated Brownian motions, $\mu \in \mathbb{R}^{d}$, and $\Sigma =\sigma \sigma ^{\top }\in \mathbb{R}^{d\times d}$ is regular. Assume further that there is a risk-free asset with value $S^{0}=e^{rt}$. An individual who starts working at time $0$ and retires at time $T$ makes pension contributions at the deterministic rate $y_{t}$ per unit of time. The task of the pension fund manager is to invest these contributions on behalf of the individual so as to maximize the expected utility of the terminal value of the pension plan. To aid tractability, it is customary to consider utility functions of the form 
\begin{equation*}
U_\gamma(x)=\frac{x^{1-\gamma }}{1-\gamma },\quad \gamma >0,\gamma \neq 1.
\end{equation*}
The analysis can be extended to $\gamma =1$ with $U_1(x)=\ln x$ and we will do so in due course. 

We seek the optimal investment plan $\pi^*$ that solves 
\begin{subequations}\label{eq:intro}
\begin{align}
\pi^*&=\operatorname*{arg\,max}_{\pi\geq 0,\ \pi \mathbf{1}\leq 1}E\left[U_\gamma(W_T)\right] \text{ subject to}\\
dW_t&=\left( rW_t+y_t\right) dt+\pi_t W_t\left( \frac{dS_t}{S_t}-r\mathbf{1}dt\right). \label{eq:dW}
\end{align}
\end{subequations}
Here $W_t$ denotes accumulated savings and $\pi$ represents the proportions invested in the risky assets.%
\footnote{By convention, $\pi$ is a row vector while $S$, $\mu$, and $\mathbf{1}$ are column vectors.}
Parameter $\gamma$ captures the risk-aversion of the individual account holder. The restrictions imposed on $\pi$ reflect typical institutional constraints faced by pension funds. In addition to  shortsale constraints on risky assets, $\pi\geq 0$, there is a credit constraint that prevents the fund manager from borrowing against the value of future contributions, $\pi \mathbf{1}\leq 1$.

It is well known that without constraints on $\pi$ and without contributions ($y_t=0$) the optimal investment strategy is given by
\begin{equation}\label{eq: pihat1}
\pi^*=\frac{(\mu -r\mathbf{1})^{\top }\Sigma ^{-1}}{\gamma} = 
\underset{\pi\in\mathbb{R}^d}{\arg\max }\ \pi(\mu -r\mathbf{1})-\frac{\gamma}{2}\pi\Sigma\pi^\top.  
\end{equation}
In the context of the optimization problem \eqref{eq:intro}, one is thus lead to consider a heuristic fixed proportions strategy
\begin{equation}
\pi ^{(1)}=\text{ }\underset{\pi \geq 0,\pi \mathbf{1}\leq 1}{\arg \max }\
\pi (\mu -r\mathbf{1})-\frac{\gamma }{2}\pi \Sigma \pi ^{\top }.  \label{eq: pi(1)}
\end{equation}
Suppose the weights in \eqref{eq: pihat1} are strictly positive. Taken as a function of risk aversion $\gamma$, the optimal weights $\pi
^{(1)}$ are no longer equal to the risky mix from \eqref{eq: pihat1} adjusted for the leverage constraint $\pi\mathbf{1}\leq 1$, as given by the formula 
\begin{equation}
\pi ^{(0)}=\frac{(\mu -r\mathbf{1})^{\top }\Sigma ^{-1}}{\max \left( (\mu -r\mathbf{1})^{\top }\Sigma ^{-1}\mathbf{1},\gamma \right)}.  \label{eq: pi(0)}
\end{equation}
Instead, for low levels of the risk aversion parameter $\gamma $ the \emph{relative weights} in $\pi ^{(1)}$ change in a way that entails substitution towards the riskier assets as $\gamma $ decreases.

One might reasonably expect that strategy \eqref{eq: pi(1)} would provide satisfactory heu\-ris\-tic approximation of the fully optimal investment strategy. However, numerical experiments reveal that the character of the optimal investment changes more dramatically than suggested by equation \eqref{eq: pi(1)}. Simulations capture a phenomenon known in pension finance as stochastic lifestyling, a term coined by \citet{cairns.al.06}, whereby it is optimal early on to invest the accumulated savings in stocks and then gradually switch the investment into bonds and safe deposits as the retirement approaches and the total amount of savings increases. Thus the optimal strategy behaves \emph{as if} the risk-aversion coefficient were lower for low levels of accumulated funds.

Because the fully optimal strategy $\pi^*$ in \eqref{eq:intro} has to be computed numerically by \emph{dynamic} programming and because it is a non-linear function of both time $t$ and the accumulated savings $W_{t}$, at first sight it is difficult to see how one can characterize the lifestyling effect explicitly. In this paper we point out that there is an excellent heuristic approximation of the lifestyling effect, given by a formula that is no less explicit than equation \eqref{eq: pi(1)}.

To arrive at the correct lifestyling formula, one must adopt Samuelson's view of the investment weights \eqref{eq: pihat1}. When the individual savings plan can borrow as well as invest at the risk-free rate $r$ \citet{samuelson.69}, 
and more explicitly \citet{hakansson.70}, have pointed out that the presence of contributions does not affect the constant proportions strategy \eqref{eq: pihat1} provided that the risky investment is made out of lifetime pension wealth 
\begin{equation*}
\widebar{W}_t = W_t + \mathrm{PV}_t,
\end{equation*} 
where $\mathrm{PV}_t$ is the present value of all future pension contributions as of time $t$.

If we denote by $\widebar{\pi}_t$ the proportions of risky investment out of lifetime pension wealth $\widebar{W}_t$, the
credit constraint $\pi_t\mathbf{1}\leq 1$ is transformed to $\widebar{\pi }_{t}\mathbf{1}\leq \alpha _{t}$, where 
\begin{equation}\label{eq: alpha}
\alpha _{t}=\frac{W_{t}}{\widebar{W}_t}  
\end{equation}
is the ratio of the already accumulated savings to the entire lifetime pension capital. Observe that in the Samuelson world the heuristic strategy $\pi ^{(1)}$ corresponds to 
\begin{equation*}
\widebar{\pi }^{(1)}\left( \alpha _{t}\right) =\alpha _{t}\pi ^{(1)}.
\end{equation*}
Observe also that if the sum of weights $\pi ^{(1)}\mathbf{1}$ is strictly less than 1 then the sum of weights in 
$\widebar{\pi }^{(1)}\left( \alpha_{t}\right)$ will be strictly less than $\alpha _{t}$ for all $\alpha_{t}\in (0,1)$ which is unlikely to be optimal. We therefore also consider a modified heuristic 
\begin{equation*}
\widebar{\pi }^{(2)}\left( \alpha _{t}\right) =\min \left( \frac{\alpha_{t}}{\pi ^{(1)}\mathbf{1}},1\right) \pi ^{(1)},  
\end{equation*}
that corresponds to cash-in-hand investment proportions
\begin{equation}\label{eq: pi(2)}
\pi ^{(2)}\left( \alpha _{t}\right) =\frac{\pi ^{(1)}}{\max \left( \pi
^{(1)}\mathbf{1},\alpha _{t}\right) }.  
\end{equation}

However, the key breakthrough of this paper is achieved by formulating a heuristic strategy directly in the Samuelson world, in the form 
\begin{equation*}
\widebar{\pi }^{(3)}(\alpha _{t})
=\underset{\pi \geq 0,\pi \mathbf{1}\leq\alpha _{t}}{\arg \max }\ \pi (\mu -r\mathbf{1})-\frac{\gamma }{2}\pi \Sigma \pi^{\top },  
\end{equation*}
which, when expressed as proportions out of accumulated savings $W_t$, yields
\begin{equation}\label{eq: pi(3)intro}
\pi ^{(3)}\left( \alpha _{t}\right) =\frac{\widebar{\pi }^{(3)}(\alpha_{t})}{\alpha _{t}}
=\underset{\pi \geq 0,\pi \mathbf{1}\leq 1}{\arg \max }\ \pi (\mu -r\mathbf{1})-\frac{\alpha_t\gamma }{2}\pi \Sigma \pi^{\top }.  
\end{equation}

We show that, unlike $\pi ^{(1)}$ and $\pi ^{(2)}(\alpha _{t})$, the strategy $\pi ^{(3)}(\alpha _{t})$ is an excellent approximation to the fully optimal strategy and can therefore serve as a simple rule of thumb for pension plan providers who wish to offer a choice of lifestyling strategies to their clients, while also specifying the sense in which such lifestyling is optimal. To reduce the barriers to application further, we analyze the explicit dependence of $\pi ^{(3)}$ on $\alpha _{t}$ for a given set of binding constraints. For example, assuming that the constraints $\pi \geq 0$ are not binding, the near-optimal strategy $\pi ^{(3)}$ is of the form
\begin{equation}\label{eq: pi2explicit}
\pi ^{(3)}(\alpha _{t})=\frac{(\mu -r\mathbf{1})^\top\Sigma^{-1}}{\gamma \alpha _{t}}
+\frac{\mathbf{1}^{\top }\Sigma ^{-1}}{\mathbf{1}^{\top }\Sigma ^{-1}\mathbf{1}}
\min \left(1-\frac{(\mu -r\mathbf{1})^\top\Sigma^{-1}\mathbf{1}}{\gamma \alpha _{t}},0\right).  
\end{equation}
Note that the non-negativity constraint will become binding for $\alpha _{t}$ small enough, at which point, for typical parameter values, 
the formula directs all accumulated savings to be invested in stocks. Interestingly, 
$\mathbf{1}^{\top }\Sigma ^{-1}/\mathbf{1}^{\top }\Sigma^{-1}\mathbf{1}$ is the classical Markowitz minimum variance portfolio.

Formula \eqref{eq: pi2explicit} captures the main essence of the lifestyling effect, representing in a nutshell the main conceptual
contribution of our paper. It not only shows the change in portfolio composition as a function of $\alpha _{t}$ for fixed risk 
aversion, but it also neatly demonstrates that the portfolio composition will change with decreasing $\gamma $ when there are no future contributions to consider ($\alpha _{t}=1$). According to the formula, the near-optimal investment proportions do behave as if the risk aversion were lower for low levels of accumulated funds, with effective risk aversion equal to $\alpha _{t}\gamma $.

The article is organized as follows. Section~\ref{S:theory} introduces what we call the `Samuelson transform', linking a model with gradual contributions to an equivalent model where all capital is paid up-front but there are additional constraints on how the capital can be invested. We review the mathematical theory guaranteeing existence of an optimal strategy in the world with contributions and via the Samuelson link also in the world without contributions but with investment constraints. In Section~\ref{S:analysis} we provide economic analysis of the competing strategies, both in terms of welfare impact and portfolio weights. We close this section with a thorough robustness analysis. Section~\ref{S:conclusions} concludes.

\section{Theory}\label{S:theory}
\subsection{Samuelson transform \label{SS:Samuelson}}
We denote by $Y_{t}=\int_{0}^{t}y(u)du$ the cumulative pension contribution up to and including time $t$. Function $y$ is assumed to be deterministic, non-negative, and integrable on $[0,T]$. The price process of all assets, including the risk-free asset, is denoted by 
$S = (S^{0},S^{1:d})$. We assume $S^{1:d}$ is a geometric Brownian motion with drift as described in equation \eqref{lognormal}, 
while $S_{t}^{0}=e^{rt}$ represents a bank account with risk-free deposit rate $r.$ Risk-free borrowing is excluded.

The process 
\begin{equation*}
\mathrm{PV}_{t}=\int_{t}^{T}e^{-r(u-t)}dY_{u},
\end{equation*}
is the present value at time $t$ of all contributions in the period $(t,T]$.

\begin{definition}\label{def: sf} 
We say that $\varphi$ is a self-financing strategy for price process $S$ and cumulative contributions $Y$, writing 
$\varphi\in\Theta (S,Y)$, if $\varphi$ is predictable, $S$--integrable, and 
\begin{equation*}
\varphi _{0}S_{0}+\int_{0}^{t}\varphi _{u}dS_{u}+Y_{t}=\varphi _{t}S_{t}.
\end{equation*}
We denote by $\Theta _{x}(S,Y)$ the set of all self-financing strategies with initial capital $x$, 
\begin{equation*}
\Theta _{x}(S,Y)=\{\varphi \in \Theta (S,Y):\varphi _{0}S_{0}=x\}.
\end{equation*}
\end{definition}

Consider the following transformation of trading strategies $\varphi \mapsto 
\widebar{\varphi }$: 
\begin{align}
\widebar{\varphi}_{t}^{1:d}& =\varphi_{t}^{1:d},  \label{samuelson1} \\
\widebar{\varphi}_{t}^{0}& =\varphi_{t}^{0}+e^{-rt}\mathrm{PV}_{t}.
\label{samuelson2}
\end{align}%
We call (\ref{samuelson1}--\ref{samuelson2}) the \emph{Samuelson transform}. Using the numeraire change technique of \citet{geman.al.95}
it is readily seen that the Samuelson transform\emph{\ }is a one-to-one mapping between $\Theta _{x}(S,Y)$ and 
$\Theta _{x+\mathrm{PV}_{0}}(S,0)$.

We can now turn our attention to a situation where borrowing against future contributions is no longer possible.

\begin{definition}
\label{d2} Consider an arbitrary self-financing strategy $\varphi \in \Theta_{x}(S,Y)$ with an arbitrary contribution process $Y.$ Assume that $\varphi\geq 0$ and $S\geq 0$. We define the vector of proportions, $\pi (\varphi )$, invested in available risky assets by 
\begin{equation*}
\pi _{i}(\varphi )=\frac{\varphi ^{i}S^{i}}{\varphi S},\qquad \qquad i\in\{1,\ldots,d\},  
\end{equation*}
using the convention $0/0=0$.
\end{definition}

\begin{proposition}\label{p2} 
Suppose $S\geq 0$. The Samuelson transform is a one-to-one mapping between 
$\mathcal{A}_{x}=\left\{ \varphi \in \Theta _{x}(S,Y):\pi (\varphi )\geq 0,\pi (\varphi )\mathbf{1}\leq 1\right\}$, and
\begin{equation}\label{eq: C3}
\widebar{\mathcal{A}}_{x+\mathrm{PV}_{0}}=\{\widebar{\varphi }\in \Theta _{x+\mathrm{PV}_{0}}(S,0)
:\pi (\widebar{\varphi })\geq 0,\pi (\widebar{\varphi})\mathbf{1}\leq 1-\mathrm{PV}/\widebar{\varphi }S\}.  
\end{equation}
\end{proposition}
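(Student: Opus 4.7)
My plan is to use the fact, already noted in the paragraph following Definition~\ref{def: sf}, that the Samuelson transform is a bijection $\Theta_x(S,Y) \leftrightarrow \Theta_{x+\mathrm{PV}_0}(S,0)$. What remains is therefore purely algebraic: verify that under this bijection the constraints defining $\mathcal{A}_x$ match exactly the constraints defining $\widebar{\mathcal{A}}_{x+\mathrm{PV}_0}$. So the whole proof reduces to translating each of the inequalities $\pi(\varphi)\geq 0$, $\pi(\varphi)\mathbf{1}\leq 1$, $\pi(\widebar\varphi)\geq 0$, and $\pi(\widebar\varphi)\mathbf{1}\leq 1-\mathrm{PV}/\widebar\varphi S$ back into statements about the individual holdings $\varphi^0$ and $\varphi^{1:d}$ and observing that the two translations coincide.

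The first step is to record two elementary identities from the Samuelson formulas \eqref{samuelson1}--\eqref{samuelson2}: (i) the risky holdings are unchanged, $\widebar\varphi^{1:d}_t S^{1:d}_t = \varphi^{1:d}_t S^{1:d}_t$; and (ii) total wealth gets shifted by the present value of future contributions,
\begin{equation*}
\widebar\varphi_t S_t = \varphi_t S_t + e^{-rt}\mathrm{PV}_t\cdot S^0_t = \varphi_t S_t + \mathrm{PV}_t.
\end{equation*}
In particular, since $\mathrm{PV}\geq 0$, the transformed wealth dominates the original wealth pointwise.

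Next I would unpack the constraints. On the contribution side, using $S^{1:d}\geq 0$ and the convention $0/0=0$, the sign constraint $\pi(\varphi)\geq 0$ is equivalent to $\varphi^{1:d}\geq 0$, while $\pi(\varphi)\mathbf{1}\leq 1$ is equivalent to $\sum_{i\geq 1}\varphi^i S^i\leq \varphi S$, i.e.\ $\varphi^0 S^0\geq 0$, i.e.\ $\varphi^0\geq 0$. On the contribution-free side, identity~(i) gives $\pi(\widebar\varphi)\geq 0\Leftrightarrow\widebar\varphi^{1:d}\geq 0\Leftrightarrow\varphi^{1:d}\geq 0$, and identity~(ii) rewrites the upper bound as
\begin{equation*}
\sum_{i\geq 1}\widebar\varphi^i S^i \;\leq\; \widebar\varphi S - \mathrm{PV} \;=\; \varphi S,
\end{equation*}
which, again by~(i), is the same inequality $\varphi^0 S^0\geq 0$. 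Hence $\varphi\in\mathcal{A}_x$ iff $\widebar\varphi\in\widebar{\mathcal{A}}_{x+\mathrm{PV}_0}$, and the bijection from the paragraph after Definition~\ref{def: sf} restricts to the required bijection between the constrained sets.

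The only mildly subtle point, which I would address explicitly, is the degenerate case in which total wealth vanishes on some set. There one must check that the $0/0=0$ convention produces consistent answers on both sides: if $\varphi S=0$ then $\varphi\geq 0$ forces $\varphi^0=0$ and $\varphi^{1:d}S^{1:d}=0$, so by~(i) also $\widebar\varphi^{1:d}S^{1:d}=0$, and by~(ii) $\widebar\varphi S=\mathrm{PV}$, giving $1-\mathrm{PV}/\widebar\varphi S=0$ whenever $\mathrm{PV}>0$, so both proportion vectors are zero by convention and both constraint pairs are trivially satisfied. This bookkeeping is really the only place the proof could slip, everything else being a direct substitution.
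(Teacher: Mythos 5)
Your argument is correct and follows essentially the same route as the paper's own proof, which is the compact chain of equivalences $\pi(\varphi)\geq 0\wedge\pi(\varphi)\mathbf{1}\leq 1\iff\varphi^{0}S^{0}\geq 0\wedge\varphi^{1:d}\geq 0\iff\widebar{\varphi}^{0}S^{0}\geq \mathrm{PV}\wedge\widebar{\varphi}^{1:d}\geq 0\iff\pi(\widebar{\varphi})\geq 0\wedge\pi(\widebar{\varphi})\mathbf{1}\leq 1-\mathrm{PV}/\widebar{\varphi}S$; your identities (i) and (ii) are exactly what make the middle step work. The only addition is your explicit bookkeeping for the degenerate case $\varphi S=0$ under the $0/0=0$ convention, which the paper leaves implicit but which does not change the substance of the argument.
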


\begin{proof}$\pi (\varphi )\geq 0\,\wedge\,\pi (\varphi )\mathbf{1}\leq 1\iff\varphi ^{0}S^{0}\geq 0\,\wedge\,\varphi ^{1:d}\geq 0
\iff\widebar{\varphi}^{0}S^{0}\geq~\mathrm{PV}\wedge\,\widebar{\varphi }^{1:d}\geq 0
\iff \pi (\widebar{\varphi })\geq 0\,\wedge\, \pi (\widebar{\varphi })\mathbf{1}\leq 1-\mathrm{PV}/\widebar{\varphi }S$. 
\end{proof}

Proposition~\ref{p2} clarifies the link between the classical Samuelson paradigm and the situation where the risk-free borrowing against future contributions is precluded. While in the classical case the sum of risky proportions is unconstrained, there is now in \eqref{eq: C3} a stochastic constraint on the total proportion invested in the risky assets. The risky proportion must not exceed $1-\mathrm{PV}/\widebar{\varphi }S$ in Samuelson's world without contributions. In economic terms, risky investment can only be financed from past contributions and from past capital gains. Below, we investigate how this constraint influences the leverage and the relative proportions invested in risky assets.

\subsection{Hamilton--Jacobi--Bellman equations}

In this subsection we relate the optimal investment strategy to the solutions of two Hamilton--Jacobi--Bellman (HJB) equations. The twin representation turns out to be important in the proof of existence and uniqueness (Subsection~\ref{SS:existence}) and in the proof of optimality (Subsection~\ref{SS:optimality}) but most importantly it provides economic motivation for the near-optimal strategy 
(Subsection~\ref{SS:pi3}).

For the sake of brevity, hereafter we consider a constant contribution rate $y$. We begin by writing out formally the partial differential equation (PDE) in the world with contributions, \vspace{-0.15in}
\begin{subequations}
\label{eq: HJB IVP}
\begin{align}
0={}& \sup_{\pi \geq 0,\pi \mathbf{1\leq }1}v_{t}+v_{x}(y+\left( r+\pi (\mu
-r\mathbf{1})\right) x)+\frac{x^{2}}{2}v_{xx}\pi \Sigma \pi ^{\top },  \label{eq: HJB}
\\
v(T,x)={}& \frac{x^{1-\gamma }}{1-\gamma}.  \label{eq: HJB_boundary}
\end{align}
\end{subequations}
The terms standing by $v_{x}$ and $v_{xx}$ originate from the dynamics of accumulated savings $W$ in \eqref{eq:dW}.
In Samuelson's world without contributions the corresponding HJB equation reads\vspace{-10pt}%
\begin{subequations}
\label{eq: HJBbar IVP}
\begin{align}
0={}& \sup_{\widebar{\pi }\geq 0,\widebar{\pi }\mathbf{1\leq }1-\mathrm{PV}_t/\widebar{x}}\widebar{v}_{t}+\widebar{x}\widebar{v}_{\widebar{x}}(r+\widebar{\pi }(\mu -r\mathbf{1}))+\frac{\widebar{x}^{2}}{2}\widebar{v}_{\widebar{x}%
\widebar{x}}\widebar{\pi }\Sigma \widebar{\pi }^{\top },
\label{eq: HJBbar} \\
\widebar{v}(T,\widebar{x})={}& \frac{\widebar{x}^{1-\gamma }}{1-\gamma},
\label{eq: HJBbar_boundary}
\end{align}%
\end{subequations}
corresponding to lifetime pension wealth dynamics 
\begin{equation}\label{eq: dWbar}
d\widebar{W}_t=r\widebar{W}_tdt+\widebar{\pi }_t\widebar{W}_t\left( \frac{dS_t}{S_t}-r\mathbf{1}dt\right).  
\end{equation}

Similarly, the value function corresponding to the heuristic strategy $\pi^{(i)}$ for $i\in\{0,1,2,3\}$ in the world with contributions is formally given as a solution of 
\begin{subequations}
\label{eq: v(i) IVP}
\begin{align}
0={}& v_{t}^{(i)}+v_{x}^{(i)}\left( y+\left( r+\pi ^{(i)}(\mu -r\mathbf{1})x\right)
\right) +\frac{x^{2}}{2}v_{xx}\pi ^{(i)}\Sigma \pi ^{(i)\top },
\label{eq:v^(i) PDE} \\
v^{(i)}(T,x)={}& \frac{x^{1-\gamma }}{1-\gamma},  \label{eq: v^(i) BC}
\end{align}
\end{subequations}
where $\pi ^{(i)}$ is taken to be a fixed function of $(t,x)$ as indicated in the introduction. In the Samuelson world, one obtains an
analogous PDE for the strategies $\widebar{\pi }^{(i)}$, \vspace{-10pt}
\begin{subequations}
\label{eq: vbar(i) IVP}
\begin{align}
0={}& \widebar{v}_{t}^{(i)}
+\widebar{x}\widebar{v}_{\widebar{x}}^{(i)}(r+\widebar{\pi }^{(i)}(\mu -r\mathbf{1}))
+\frac{\widebar{x}^{2}}{2}\widebar{v}_{\widebar{x}\widebar{x}}^{(i)}\widebar{\pi }^{(i)}\Sigma \widebar{\pi }^{(i)\top },  
\label{eq:vbar^(i) PDE} \\
\widebar{v}^{(i)}(T,\widebar{x})={}& \frac{\widebar{x}^{1-\gamma }}{1-\gamma}. \label{eq: vbar^(i) BC}
\end{align}
\end{subequations}

The two sets of equations are equivalent in the sense that every $\mathcal{C}^{1,2}$ solution of the initial value problem 
\eqref{eq: HJB IVP} generates a $\mathcal{C}^{1,2}$ solution of \eqref{eq: HJBbar IVP} via transformation 
$\widebar{v}(t,\widebar{x})=v(t,\widebar{x}-\mathrm{PV}_t).$ Conversely, any $\mathcal{C}^{1,2}$ solution of \eqref{eq: HJBbar IVP} gives rise to a $\mathcal{C}^{1,2}$ solution of \eqref{eq: HJB IVP} through $v(t,x)=\widebar{v}(t,x+\mathrm{PV}_t)$. The same correspondence holds between \eqref{eq: v(i) IVP} and \eqref{eq: vbar(i) IVP}.

If, for the time being, we accept as given that \eqref{eq: HJB IVP}, resp. \eqref{eq: HJBbar IVP}, admit optimal controls $\pi^*$, resp. $\widebar{\pi}^*$, then there is also a relationship between \eqref{eq: HJB IVP} and \eqref{eq: v(i) IVP} to the extent 
that if one substitutes $\pi^*$ for $\pi^{(i)}$ in \eqref{eq: v(i) IVP} one obtains a solution of \eqref{eq: HJB IVP}. The same correspondence holds between \eqref{eq: HJBbar IVP} and \eqref{eq: vbar(i) IVP} on replacing $\widebar{\pi}^{(i)}$ with $\widebar{\pi}^*$. 

Before we examine the optimal controls it is helpful to associate a
coefficient of risk aversion to each indirect utility,
\begin{align}
R(t,x)& =-\frac{xv_{xx}(t,x)}{v_{x}(t,x)},  \label{eq: R} \\
\widebar{R}(t,\widebar{x})& =-\frac{\widebar{x}\widebar{v}_{\widebar{x}\widebar{x}}(t,\widebar{x})}{\widebar{v}_{\widebar{x}}(t,\widebar{x})}.
\label{eq: Rbar}
\end{align}
The optimal portfolio strategy is related to the following deterministic mean-variance utility 
$f:[0,\infty )\times (0,\infty )\rightarrow \mathbb{R}$, with risky investment constraint $\alpha $ and risk aversion $\rho$,  
\begin{equation}\label{eq: MVmax}
f(\alpha ,\rho )=\sup_{\pi \geq 0,\pi \mathbf{1\leq }\alpha }\pi (\mu -r\mathbf{1})-\frac{\rho }{2}\pi \Sigma \pi ^{\top }.  
\end{equation}
Due to strict convexity in $\pi $ and compactness of the optimization region
there is a unique optimizer in the deterministic problem \eqref{eq: MVmax} which we denote $\hat{\pi}(\alpha ,\rho )$,
\begin{equation}\label{eq: pihat def}
\hat{\pi}(\alpha ,\rho )=\underset{\pi \geq 0,\pi \mathbf{1}\leq\alpha}{\arg \max }\ 
\pi (\mu -r\mathbf{1})-\frac{\rho }{2}\pi \Sigma \pi ^{\top }.  
\end{equation}
We note for future use that $\hat{\pi}(\alpha ,\rho )$ is self-similar, that is, for $\alpha >0$ one has
\begin{equation}\label{eq: self similar}
\hat{\pi}(\alpha ,\rho )=\alpha \hat{\pi}(1,\alpha \rho ),
\end{equation}
with the convention $0\times \infty =0$.

Using the newly established notation the formal optimal controls in \eqref{eq: HJB IVP} and \eqref{eq: HJBbar IVP} can be written as 
\vspace{-0.14in}
\begin{subequations}
\label{eq: pi* pibar*}
\begin{align}
\pi ^{\ast }(t,x) ={}&\hat{\pi}(1,R(t,x)),  \label{eq: pi*} \\
\widebar{\pi }^{\ast }(t,\widebar{x}) ={}&\hat{\pi}(1-\mathrm{PV}_t/\widebar{x},\widebar{R}(t,\widebar{x})). \label{eq: pibar*}
\end{align}
\end{subequations}
Furthermore, the self-similarity of $\hat{\pi}(\alpha ,\rho )$ yields 
\begin{align*}
\pi ^{\ast }(t,x) &=\left( 1+\mathrm{PV}_t/x\right) \widebar{\pi }^{\ast
}(t,x+\mathrm{PV}_t), \\
\widebar{\pi }^{\ast }(t,\widebar{x}) 
&=\left( 1-\mathrm{PV}_t/\widebar{x}\right) \pi ^{\ast }(t,\widebar{x}-\mathrm{PV}_t).
\end{align*}
Economically this is no surprise in the light of our analysis in Subsection~\ref{SS:Samuelson}.

\subsection{Existence and uniqueness}\label{SS:existence} 
The advantage of the world with contributions is
that it measures investment in natural units -- out of accumulated funds. In addition, it is mathematically better behaved in that it can be transformed to a strictly parabolic quasilinear PDE whose properties, albeit mathematically involved, are well understood in specialist literature.%
\footnote{A related constrained optimization problem is studied in \citet{vila.zariphopoulou.97}. Their proofs make it clear that a rigorous 
mathematical treatment of the problem is technically demanding. We follow an alternative line of attack proposed in 
\citet{kilianova.sevcovic.13} that allows us to condense the technical arguments considerably.}
\begin{theorem}
\label{th: main}Under the assumption 
\begin{equation}
\mu _{i}>r,\text{\qquad for some }i\in \{1,\ldots, d\},  \label{eq: mu > r}
\end{equation}
the inital value problems (\ref{eq: HJB IVP}--\ref{eq: vbar(i) IVP}) have a unique classical solution belonging to 
$\mathcal{C}^{1,2}([0,T]\times (0,\infty ))$. The corresponding maximizers $\pi ^{\ast }(t,x)$ and $\widebar{\pi }^{\ast }(t,x)$ from 
\eqref{eq: pi* pibar*} have the property that $x\pi ^{\ast }(t,x)$, resp. $x\widebar{\pi }^{\ast }(t,x)$, is locally Lipschitz--con\-ti\-nuous in $x$, uniformly in $t$, on $[0,T]\times \lbrack 0,\infty )$.
\end{theorem}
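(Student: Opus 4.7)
The plan is to follow the line of attack of \citet{kilianova.sevcovic.13}, converting the nonlinear HJB problem \eqref{eq: HJB IVP} into a strictly parabolic quasilinear PDE for a single Riccati-type unknown, solving that PDE by classical quasilinear parabolic theory, and then transferring existence, uniqueness, and the regularity claims back to the original coordinates, to the Samuelson world, and to the verification PDEs.

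First, I would carry out the supremum in \eqref{eq: HJB} explicitly: by \eqref{eq: MVmax} it equals $xv_{x}f(1,R(t,x))$ with $R$ as in \eqref{eq: R}, so \eqref{eq: HJB} reduces to $0=v_{t}+v_{x}(y+rx)+xv_{x}f(1,R(t,x))$. Introducing the Riccati-type unknown $\varphi(t,z)=1/R(t,e^{z})$ and differentiating once in $z$ produces a quasilinear parabolic equation of the form $\varphi_{t}=A(\varphi)\varphi_{zz}+B(z,\varphi)\varphi_{z}+C(z,\varphi,\varphi_{z})$ on $[0,T]\times\mathbb{R}$, with flat terminal data $\varphi(T,\cdot)\equiv 1/\gamma$. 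Assumption \eqref{eq: mu > r} guarantees $f(1,\rho)>0$ for every $\rho>0$, and together with the piecewise-quadratic (hence globally Lipschitz) structure of $\rho\mapsto f(1,\rho)$ and a priori upper and lower bounds on $\varphi$ this places the problem inside the scope of the quasilinear parabolic existence results invoked by \citet{kilianova.sevcovic.13}. One then obtains a unique $\varphi\in\mathcal{C}^{1,2}$, and integrating back yields a unique classical solution $v\in\mathcal{C}^{1,2}([0,T]\times(0,\infty))$ of \eqref{eq: HJB IVP}.

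The remaining three well-posedness claims follow by transfer. For \eqref{eq: HJBbar IVP}, the bijection $\widebar{v}(t,\widebar{x})=v(t,\widebar{x}-\mathrm{PV}_{t})$ recorded after \eqref{eq: vbar(i) IVP} gives an isomorphism between classical solutions on $\{x>0\}$ and on $\{\widebar{x}>\mathrm{PV}_{t}\}$. For \eqref{eq: v(i) IVP} and \eqref{eq: vbar(i) IVP}, each $\pi^{(i)}$ is a fixed, bounded, piecewise-smooth function of $(t,x)$, so the equations are \emph{linear} parabolic with locally Lipschitz coefficients and standard theory delivers unique $\mathcal{C}^{1,2}$ solutions.

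Finally, for the Lipschitz statement, \eqref{eq: pi*} gives $\pi^{\ast}(t,x)=\hat{\pi}(1,R(t,x))$. A parametric KKT analysis of the finite quadratic programme \eqref{eq: pihat def} shows that $\hat{\pi}(1,\rho)$ is piecewise rational in $1/\rho$, hence locally Lipschitz in $\rho$ on $(0,\infty)$; combined with the $\mathcal{C}^{1,2}$ regularity of $v$, which makes $R(t,x)$ locally Lipschitz in $x$ uniformly in $t\in[0,T]$, this yields the stated local Lipschitz continuity of $x\pi^{\ast}(t,x)$, with the analogous statement for $\widebar{x}\widebar{\pi}^{\ast}(t,\widebar{x})$ following from the bijection above. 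The principal technical obstacle is the a priori bound step: securing $0<\underline{c}\le\varphi\le\overline{c}<\infty$ and verifying the growth conditions of the quoted quasilinear existence theorem on the unbounded spatial domain --- this is precisely where assumption \eqref{eq: mu > r} is decisive, since it forces the driving nonlinearity $f(1,1/\varphi)$ away from zero on the range of interest.
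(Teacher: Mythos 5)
Your architecture coincides with the paper's: the Kilianov\'a--\v{S}ev\v{c}ovi\v{c} transformation to a quasilinear parabolic equation for the indirect risk aversion (the paper works with $\rho(t,z)=R(t,e^{z})=1-u_{zz}/u_{z}$ rather than its reciprocal, which is cosmetic), transfer back via $v(t,x)=u(t,\ln x)$ and $\widebar{v}(t,\widebar{x})=v(t,\widebar{x}-\mathrm{PV}_t)$, linear theory for the verification PDEs, and local Lipschitz continuity of $\hat{\pi}(1,\cdot)$ composed with the regularity of $R$. Your parametric KKT route to the Lipschitz property of $\hat\pi(1,\rho)$ is a workable substitute for the paper's citation of \citet{klatte.85}, and the differentiability of $g(\rho)=f(1,\rho)$ that the paper gets from \citet{milgrom.segal.02} is implicit in your piecewise description.

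The genuine gap is that you bound the wrong quantity away from zero. Strict parabolicity of the $\rho$-equation \eqref{eq: PDE_rho} requires \eqref{eq: property g0}, i.e.\ $0<\inf_{\rho\in(0,\gamma]}\bigl(-g'(\rho)\bigr)=\inf_{\rho\in(0,\gamma]}\tfrac12\hat{\pi}(1,\rho)\Sigma\hat{\pi}(1,\rho)^{\top}$: the diffusion coefficient is the \emph{derivative} of $g$, not $g$ itself, since the second-order term enters through $\partial_z^2 g(\rho)$ (equivalently, the coefficient of $u_{zz}$ in \eqref{eq: HJBu} is $-g'(1-u_{zz}/u_z)$). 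Asserting ``$f(1,\rho)>0$ for every $\rho>0$'' does not by itself deliver this; one needs a uniform lower bound on the optimizer, which is the content of Lemma~\ref{lem: lambda_minus}: a two-sided estimate showing a feasible single-asset portfolio attains value at least $\delta>0$ while any $\pi$ with $\pi\mathbf{1}\le\varepsilon$ attains at most $\delta/2$, forcing $\hat{\pi}(1,\rho)\mathbf{1}>\varepsilon$ uniformly on $(0,\gamma]$ and hence $\hat\pi\Sigma\hat\pi^{\top}$ bounded below by positive definiteness of $\Sigma$. (One can bridge from $\inf f>0$ to $\inf(-g')>0$ by monotonicity of $f(1,\cdot)$ and a compactness argument, but no such bridge appears in your sketch.) The same omission recurs in your treatment of \eqref{eq: v(i) IVP}--\eqref{eq: vbar(i) IVP}: ``standard linear parabolic theory'' applies only once the diffusion coefficient $\tfrac12\pi^{(i)}\Sigma\pi^{(i)\top}$ is shown to be uniformly positive, which is inequality \eqref{eq: estimate pi app} and again rests on the same lemma --- it is not a consequence of the coefficients merely being bounded and Lipschitz. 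Finally, the case $\gamma=1$ (terminal datum $\ln x$), which the paper handles in a separate step, is not addressed.
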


\begin{proof} 1) The difficult part is to reformulate the problem into a form where strict parabolicity can be established. We follow the strategy of \citet{kilianova.sevcovic.13} whose key result is summarized in Proposition~\ref{prop: KS13}. One begins with equation 
\eqref{eq: HJBu} formally obtained from \eqref{eq: HJB} by a logarithmic transformation $x\rightarrow e^{z},v(t,x)\rightarrow u(t,z)$.
Momentarily granting the assumptions of Proposition~\ref{prop: KS13} one establishes the existence and properties of an auxiliary function 
$\rho(t,z) $ from \eqref{eq: PDE_rho}. Subsequently, from $\rho $ one constructs via \eqref{eq: PDE_u_from_rho} $u$ as a solution of 
\eqref{eq: HJBu} with a further property $1-u_{zz}/u_{z}=\rho $. Therefore, the indirect risk aversion coefficient 
$R(t,x)=-xv_{xx}/{v_{x}} = \rho (t,\ln x)$ belongs to $\mathcal{C}^{1,2}([0,T]\times (0,\infty ))$.

2) It is now readily seen that $v(t,x)=u(t,\ln x)$ is a unique classical solution of the HJB equation \eqref{eq: HJB} and likewise 
$\widebar{v}(t,\widebar{x})=v(t,\widebar{x}-\mathrm{PV}_t)$ is a unique classical solution of the HJB equation \eqref{eq: HJBbar}.

3) To invoke Proposition~\ref{prop: KS13}, it remains to prove that under the assumptions of Theorem~\ref{th: main} function $g$, 
\begin{equation*}
g\left( \rho \right) =f(1,\rho ) = \sup_{\pi \geq 0,\pi \mathbf{1\leq }1}\pi (\mu -r\mathbf{1})
-\frac{\rho }{2}\pi \Sigma \pi ^{\top },
\end{equation*}%
possesses locally Lipschitz-continuous derivative with the property 
\begin{equation}
0<\inf_{\rho \in (0,\gamma ]}-g^{\prime }(\rho )\leq \sup_{\rho \in
(0,\gamma ]}-g^{\prime }(\rho )<\infty .  \label{eq: property g0}
\end{equation}
Since the region $A=\{\pi \in \mathbb{R}^{d}:\pi \geq 0,\pi \mathbf{1}\leq 1\}$ is compact, one has 
\begin{equation}\label{eq: sup pi in A}
\sup_{\pi \in A}\frac{1}{2}\pi \Sigma \pi ^{\top }<\infty ,
\end{equation}
and by \citet{milgrom.segal.02} $g$ is differentiable everywhere on $(0,\infty )$ with 
\begin{equation}
g^{\prime }(\rho )=-\frac{1}{2}\hat{\pi}(1,\rho )\Sigma \hat{\pi}(1,\rho)^{\top }.  \label{eq: gprime}
\end{equation}
Combination of \eqref{eq: sup pi in A} and \eqref{eq: gprime} proves the right-hand side inequality in \eqref{eq: property g0}. By 
\citet[Theorem 2]{klatte.85}, $\hat{\pi}(1,\rho )$ is a locally Lipschitz-continuous function of $\rho $ and therefore $g^{\prime }$ is also Lipschitz-continuous by \eqref{eq: gprime}. It remains to show that 
\begin{equation}\label{eq: inf rho}
\inf_{\rho \in (0,\gamma)}-g^{\prime }(\rho )>0,
\end{equation} 
which is where the assumption `$\mu _{i}>r$ for some $i\in\{1,\ldots,d\}$' is required. Inequality \eqref{eq: inf rho} holds
through delicate estimates in Lemma~\ref{lem: lambda_minus}.

4) To establish the local Lipschitz property of $x\pi ^{\ast }(t,x)$ note that 
\begin{equation}\label{piast}
x\pi ^{\ast }(t,x)=x\hat{\pi}(1,R(t,x)).
\end{equation}
We have shown in step 3)\ that $\hat{\pi}(1,.)$ is locally Lipschitz-continuous
and since $R(t,x)\in \mathcal{C}^{1,2}( [0,T] \times (0,\infty ))$ the claim
follows. Similar argument applies to $x\widebar{\pi }^{\ast }(t,x)$.

5) For the heuristic strategies $\pi ^{(i)}=\pi^{(i)}(t,x)$, $i\in\{0,1,2,3\}$, the situation is easier because $\pi^{(i)}$ are explicit
functions of $(t,x)$ and the resulting PDE is linear. Logarithmic transformation $z=\ln {x}$ with $u(t,z)=v(t,e^{z})$ transforms the initial value problem \eqref{eq: v(i) IVP} to 
\begin{subequations}
\label{pde_app2_ivp}
\begin{align}
\begin{split}\label{pde_app2}
0=\, &u_{t}^{(i)}+u_{z}^{(i)}\left(ye^{-z}+r+\pi ^{(i)}(\mu -r\mathbf{1})-\frac{1}{2}\pi^{(i)}\Sigma \pi ^{(i)\top }\right)\\
&\hspace{6.5cm}+\frac{1}{2}u_{zz}^{(i)}\pi ^{(i)}\Sigma \pi^{(i)\top }\,,  
\end{split} 
\\
u^{(i)}&(T,z)=\frac{e^{z(1-\gamma )}}{1-\gamma}\,.  \label{pde_app_bc2}
\end{align}
\end{subequations}
By Lemma~\ref{lem: lambda_minus}, equation \eqref{pde_app2} is strictly parabolic for $i\in\{0,1,2,3\}$. Existence of classical 
$\mathcal{C}^{1,2}$ solution follows from standard linear PDE theory 
(\citealp[Theorem III.12.1]{ladyzhenskaya.al.68}, \citealp[Theorem 5.14]{lieberman.96}).

6) In the case $\gamma =1$ we take $U_1(x)=\lim_{\gamma\to 1}\frac{x^{1-\gamma}-1}{1-\gamma}=\ln x$ and the arguments in steps 1)--5)
go through with $u^{(i)}(T,z)=u(T,z)=z$.
\end{proof}

\subsection{Optimality} \label{SS:optimality}

We say $\widebar{\pi }(t,\omega )$ is an admissible control if it is progressively measurable \citep[Definition IV.2.1]{fleming.soner.06}
and $0\leq \widebar{\pi }\mathbf{1}\leq 1-\mathrm{PV}/\widebar{W}$ for $\widebar{W}$ from \eqref{eq: dWbar}, 
\begin{equation}\label{Wbar SDE}
\frac{d\widebar{W}_t}{\widebar{W}_t}=(r+\widebar{\pi }(\mu -r\mathbf{1}))dt+\widebar{\pi }\sigma dB_t.
\end{equation}
Observe that SDE \eqref{Wbar SDE} has a unique strong solution for any progressively measurable $\widebar{\pi }$ with values in the compact set 
$0\leq \widebar{\pi }1\leq 1$ \citep[paragraph after equation IV.2.4]{fleming.soner.06}.

Comparison principle yields the estimate $\left\vert \widebar{v}(t,x)\right\vert \leq e^{C(T-t)}x^{1-\gamma }/\left|1-\gamma\right|$ for 
$\gamma >0,\gamma \neq 1$ and a suitably chosen $C>0$ dependent on $\gamma $. For $\gamma \in (0,1)$ the verification theorem 
\citep[Corollary IV.3.1]{fleming.soner.06} yields directly that $\widebar{\pi }^{\ast }(t,\widebar{W}_t)$ is the optimal Markov control policy. Because Theorem IV.3.1 in \citet{fleming.soner.06} requires the value function to be dominated by a positive power of the endogenous state variable, for $\gamma >1$ we pass to $\widebar{W}^{-1}$ whose SDE reads
\begin{align*}
\widebar{W}_t d\widebar{W}^{-1}_t &=-\widebar{W}_t^{-1}d\widebar{W}_t+\widebar{W}_t^{-2}
d\bigs[\,\widebar{W},\widebar{W}\,\bigs]_t \\
&= \left( \widebar{\pi }\Sigma \widebar{\pi }^{\top }-r-\widebar{\pi }(\mu -r\mathbf{1})\right) dt-\widebar{\pi }\sigma dB_t.
\end{align*}
Hence, by Appendix D in \citet{fleming.soner.06} $\widebar{W}^{-1}$ satisfies for any $m>0$
\begin{equation*}
E\left[ \left( \sup_{0\leq t\leq T}\widebar{W}_{t}^{-1}\right)^{\!\!m}\right]
<\infty .
\end{equation*}
This means $\widebar{v}(t,\widebar{W}_t)$ is a process of class (D) 
\citep[Definition I.1.46]{js.03} and a local supermartingale for any admissible 
strategy $\widebar{\pi }$, hence a supermartingale 
\citep[Appendix 3]{karatzas.kardaras.07}. It is furthermore a local martingale 
and therefore a true martingale \citep[Proposition I.1.47]{js.03}
for the optimal strategy $\widebar{\pi }^{\ast }(t,\widebar{W}_t)$ which
therefore remains an optimal Markov policy also for $\gamma >1$.

Finally, for $\gamma =1$ one has $U_1(x)=\ln x$. By comparison principle, the solution $\widebar{v}(t,x)$ satisfies the estimate 
$\ln x\leq \widebar{v}(t,x)\leq \ln x+C(T-t)$ for a suitably chosen $C>0$. By the It\^o formula 
$$d\ln \widebar{W}_t=\left(r+\widebar{\pi }(\mu -r\mathbf{1})-\frac{1}{2}\widebar{\pi }\Sigma\widebar{\pi }^{\top }\right)dt
+\widebar{\pi }\sigma dB_t$$ 
and therefore $\widebar{v}(t,\widebar{W}_t)$ is a process of class (D). Once again, this implies $\widebar{\pi }^{\ast }(t,\widebar{W}_t)$ is an optimal Markov policy.

The optimality results are summarized in the following theorem.
\begin{theorem} Recall the formal value function $v$ in \eqref{eq: HJB IVP}, the corresponding risk aversion function $R$ in \eqref{eq: R}, and the optimal strategy $\pi^*$ in \eqref{eq: pi*}. The following statements hold.
\begin{enumerate}[label={\rm\arabic{*})}, ref={\rm\arabic{*})}]
\item The solution $v$ of \eqref{eq: HJB IVP} is the value function of the corresponding optimal control problem, that is, it satisfies 
\begin{equation}\label{eq: optimalizacia W}
v(0,x)=\sup_{\pi(\varphi )\in \mathcal{A}_x}E_{t}\left[ \frac{1}{1-\gamma }\left( \varphi _{T}S_{T}\right) ^{1-\gamma }\right] .
\end{equation}
\item For any $x\geq 0$ there is a unique process $W$ satisfying 
\begin{align*}
dW_t &=(y+rW_t)dt+W_t\pi ^{\ast }(t,W_t)\left(\frac{dS_t}{S_t}-r\mathbf{1}dt\right), \\
W_{0} &=x.
\end{align*}
\item The optimal strategy $\varphi $ in \eqref{eq: optimalizacia W} satisfies 
\begin{align*}
\varphi _{t}^{i} &=\pi _{i}^{\ast }(t,W_{t})\frac{W_{t}}{S_{t}^{i}},\qquad i\in\{1,\ldots,d\} \\
\varphi _{t}^{0} &=e^{-rt}W_{t}(1-\pi ^{\ast }(t,W_{t})\mathbf{1}),
\end{align*}
and $\varphi S=W$.
\end{enumerate}
\end{theorem}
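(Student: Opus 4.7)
The plan is to transfer the optimality already established for the Samuelson (contribution-free) problem in Subsection~\ref{SS:optimality} back to the world with contributions, using the bijection of Proposition~\ref{p2} together with the PDE identity $\widebar{v}(t,\widebar{x})=v(t,\widebar{x}-\mathrm{PV}_t)$. Each of the three statements then follows with different amounts of work.

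For part 1), I would fix $\varphi\in\mathcal{A}_x$ and its Samuelson image $\widebar{\varphi}\in\widebar{\mathcal{A}}_{x+\mathrm{PV}_0}$. Because $\mathrm{PV}_T=0$, the transform preserves terminal wealth, $\widebar{\varphi}_T S_T=\varphi_T S_T$, so the suprema over $\mathcal{A}_x$ and $\widebar{\mathcal{A}}_{x+\mathrm{PV}_0}$ of $E[U_\gamma(\cdot)]$ coincide. The preceding arguments (handling $\gamma\in(0,1)$, $\gamma=1$ and $\gamma>1$ separately) already show that this common value equals $\widebar{v}(0,x+\mathrm{PV}_0)$ and is attained by the Markov control $\widebar{\pi}^{\ast}(t,\widebar{W}_t)$; the PDE correspondence then gives $v(0,x)=\widebar{v}(0,x+\mathrm{PV}_0)$ as required.

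For part 2), the SDE for $W$ has coefficients $y+rx+x\pi^{\ast}(t,x)(\mu-r\mathbf{1})$ and $x\pi^{\ast}(t,x)\sigma$. Theorem~\ref{th: main} asserts that $x\pi^{\ast}(t,x)$ is locally Lipschitz in $x$ uniformly in $t$ on $[0,T]\times[0,\infty)$, and the remaining terms are affine in $x$. Since $0\leq\pi^{\ast}\mathbf{1}\leq 1$, both coefficients have at most linear growth in $x$, so standard SDE theory delivers a unique non-exploding strong solution. Non-negativity of $W$ follows because at $x=0$ the drift reduces to $y\geq 0$ and the diffusion vanishes. Part 3) is then routine: inverting Definition~\ref{d2} determines the risky holdings $\varphi_t^i=\pi_i^{\ast}(t,W_t)W_t/S_t^i$, the self-financing identity $\varphi S=W$ pins down $\varphi_t^0=e^{-rt}W_t(1-\pi^{\ast}(t,W_t)\mathbf{1})$, and admissibility of $\varphi$ is inherited from $\widebar{\pi}^{\ast}$ via the Samuelson transform.

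The main obstacle I expect is the passage from the Markov optimality of $\widebar{\pi}^{\ast}$ to genuine optimality over all admissible $\varphi\in\mathcal{A}_x$: one must be sure that the supremum in \eqref{eq: optimalizacia W} cannot be improved by some non-Markovian, progressively measurable control. This requires invoking the verification-type arguments in Subsection~\ref{SS:optimality} in their strongest form---$\widebar{v}(t,\widebar{W}_t)$ a supermartingale for every admissible $\widebar{\pi}$ and a true martingale for $\widebar{\pi}^{\ast}$---together with the observation that the Samuelson transform is class-preserving on the whole progressively measurable class, not merely on Markov feedback controls. Once this bookkeeping is cleared, 1)--3) follow as outlined.
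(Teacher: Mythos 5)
Your proposal is correct and follows essentially the same route as the paper: the authors prove optimality of $\widebar{\pi}^{\ast}$ over all progressively measurable admissible controls in the Samuelson world (via the verification theorem for $\gamma\in(0,1)$ and class-(D) supermartingale/martingale arguments for $\gamma\geq 1$), and then transfer the result to the world with contributions through the Samuelson transform and the identity $\widebar{v}(t,\widebar{x})=v(t,\widebar{x}-\mathrm{PV}_t)$, with part 2) resting on the local Lipschitz property of $x\pi^{\ast}(t,x)$ from Theorem~\ref{th: main}, exactly as you outline.
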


\section{Economic analysis and numerical robustness}\label{S:analysis}
\subsection{Illustrative example}\label{SS:illustrative}
Consider the log-normal model of asset returns described in the introduction. Below we present, for illustration, a stylized model using figures broadly consistent with equity and corporate bond markets of developed economies. Numerically, we will take risk-free return of 
$r=1\%$ and two risky assets with drifts $\mu _{1}=2$\% (representing bond returns), $\mu _{2}=10$\% (representing stock returns),
volatilities 5\%, 25\% respectively and correlation -0.05, yielding the covariance matrix 
\begin{equation*}
\Sigma =\left[ 
\begin{array}{cc}
0.0025 & -0.000\,625 \\ 
-0.000\,625 & 0.0625
\end{array}
\right] .
\end{equation*}
The investment horizon has been set to $T=40$ years. We have used the cumulative contribution process $Y_{t}=t/T$ so that the cumulative
contribution is normalized to 1. The present framework provides methodology capable of analyzing and comparing results for various non-linear
contribution profiles, but in the interest of brevity we do not consider them here.

We examine three levels of relative risk aversion; low with $\gamma=2$, moderate $(\gamma =5)$, and high $(\gamma =8)$. We report the
utility of competing strategies both in terms of certainty equivalent wealth and in terms of certainty equivalent internal rate of return.%
\footnote{The certainty equivalent is computed from the formula 
$\mathrm{CE}=\left(E((\varphi _{T}S_{T})^{1-\gamma })\right) ^{1/(1-\gamma )}$. The certainty equivalent internal rate of return is given as the interest rate $\rho$ satisfying $\mathrm{CE}=\int_{0}^{T}e^{\rho (T-t)}y(t)dt$.}

To obtain the function $R(t,x)$ in \eqref{eq: R}, we solve the quasilinear second-order Cauchy problem \eqref{eq: PDE_rho} using the methodology of \citet{kilianova.sevcovic.13}. The solution $\rho(t,z)=R(t,e^z)$ is computed on a Cartesian grid $[0,40]\times [-12,6]$ with temporal step of 0.01 and spatial step of 0.001, with left boundary condition of Robin type and right boundary condition of Neumann type using the built-in Matlab function \texttt{pdepe}. The initial value problem \eqref{eq: HJB IVP} is then solved numerically
by applying the method of characteristics to the linear PDE \eqref{eq: PDE_u_from_rho} with starting values $x \in \{e^{-10}, e^{-9}, \ldots, e^{-5}\}$. These results are subsequently extrapolated to $x=0$ by linear regression in $x$. The optimal control $\pi^*$ is obtained via \eqref{eq: pihat def} and \eqref{piast}.
\subsection{Heuristic strategies $\pi^{(1)}$ and $\pi^{(2)}$}\label{SS:heuristics12}
Let us begin by comparing the performance of the optimal strategy $\pi^{\ast }$, computed numerically as described above, with the
rescaled Samuelson strategy $\pi ^{(0)}$, computed explicitly from equation \eqref{eq: pi(0)}. Table~\ref{tab: CE0-2} shows that 
$\pi ^{\ast }$ significantly outperforms the naive strategy for low and medium levels of risk aversion, while with high risk aversion the outperformance is relatively modest.

\begin{table}[tbp] \centering%
\caption{Certainty equivalents and internal rates of return
 for the heuristic strategies $\pi^{(i)}$, $i=0,1,2$, and the optimal strategy, $\pi^{*}$.}%
\medskip 
\begin{tabular}{ccccccccc}
\hline
$\gamma $ & $\mathrm{CE}^{(0)}$ & $\mathrm{IRR}^{(0)}$ & $\mathrm{CE}^{(1)}$
& $\mathrm{IRR}^{(1)}$ & $\mathrm{CE}^{(2)}$ & $\mathrm{IRR}^{(2)}$ & $\mathrm{CE}^{\ast }$ & $\mathrm{IRR}^{\ast }$ \\ \hline
2 & 2.2584 & 3.64\% & 3.3353 & 5.16\% & 3.3353 & 5.16\% & 3.6501 & 5.50\% \\ 
5 & 1.9720 & 3.08\% & 2.0153 & 3.17\% & 2.0153 & 3.17\% & 2.1782 & 3.49\% \\ 
8 & 1.6872 & 2.42\% & 1.6872 & 2.42\% & 1.7510 & 2.58\% & 1.8164 & 2.74\% \\ 
\hline
\end{tabular}
\label{tab: CE0-2}
\end{table}

To gain better understanding where the outperformance originates from, we first analyze the case $\gamma =8$ where the welfare loss is relatively small. We report in Table~\ref{tab: pi*RRA=8} the optimal portfolio weights $\pi^{\ast}(t,W_{t})$ out of accumulated savings 
(cash in hand) $W_{t}$. The naive weights $\pi ^{(0)}$ in this case coincide with $\pi ^{(1)}$ and are equal to 
$$\frac{(\mu -r\mathbf{1})^\top\Sigma^{-1}}{\gamma} =(54.6\%,18.6\%).$$ 
We observe that for high levels of cash in hand there is good agreement between the optimal and the naive strategy, with the optimal weights tending towards $\pi ^{(0)} = \pi ^{(1)}$ as $W_{t}\rightarrow \infty $. For low level of accumulated savings the difference is substantial, however, with the optimal portfolio being invested fully in stocks while portfolios $\pi ^{(0)} = \pi ^{(1)}$ are not fully invested between stocks and bonds.

\begin{table}[tbp] \centering%
\caption{Optimal strategy $\pi^{*}(t,W_{t})$ as a function of $t$ and $W_{t}$ with $\gamma=8$.}%
\medskip {\footnotesize 
\begin{tabular}{ccccccccccc}
\hline
$W_{t}$ & \multicolumn{2}{c}{$t=0$} & \multicolumn{2}{c}{$t=10$} & 
\multicolumn{2}{c}{$t=20$} & \multicolumn{2}{c}{$t=30$} & \multicolumn{2}{c}{$t=39.975$} \\ \hline  \vspace{-0.1in}\\ 
$10^{-5}$ & 0.000 & 1.000 & 0.000 & 1.000 & 0.000 & 1.000 & 0.000 & 1.000 & 0.000 & 1.000 \\
0.01  & 0.000 & 1.000 & 0.000 & 1.000 & 0.000 & 1.000 & 0.000 & 1.000 & 0.581 & 0.197 \\
0.05  & 0.000 & 1.000 & 0.000 & 1.000 & 0.000 & 1.000 & 0.107 & 0.893 & 0.553 & 0.188 \\
0.1   & 0.000 & 1.000 & 0.000 & 1.000 & 0.158 & 0.842 & 0.451 & 0.549 & 0.550 & 0.187 \\
0.2   & 0.244 & 0.756 & 0.349 & 0.651 & 0.475 & 0.525 & 0.625 & 0.375 & 0.548 & 0.186 \\
0.3   & 0.423 & 0.577 & 0.496 & 0.504 & 0.582 & 0.419 & 0.683 & 0.317 & 0.548 & 0.186 \\
0.5   & 0.569 & 0.431 & 0.614 & 0.386 & 0.668 & 0.332 & 0.730 & 0.270 & 0.547 & 0.186 \\
1     & 0.681 & 0.319 & 0.706 & 0.294 & 0.734 & 0.266 & 0.676 & 0.230 & 0.547 & 0.186 \\
2     & 0.740 & 0.260 & 0.723 & 0.246 & 0.670 & 0.228 & 0.611 & 0.208 & 0.547 & 0.186 \\
20    & 0.569 & 0.193 & 0.564 & 0.192 & 0.559 & 0.190 & 0.553 & 0.188 & 0.546 & 0.186\smallskip \\ \hline
\end{tabular}
}
\label{tab: pi*RRA=8} 
\end{table}

Staying with the case $\gamma = 8$, let us now turn to strategy $\pi ^{(2)}$ which coincides with $\pi ^{(1)}$ for high level of accumulated funds by construction (see eqs. \ref{eq: alpha} and \ref{eq: pi(2)}). Its numerical values, obtained from the explicit 
formula~\eqref{eq: pi(2)}, are displayed in Table~\ref{tab: pi(2)RRA=8}. We observe that $\pi ^{(2)}$ is better behaved for low levels of accumulated funds where it becomes fully invested in bonds and stocks, $\pi^{(2)}(\alpha )=\pi ^{(1)}/(\pi ^{(1)}\mathbf{1})=(74.7\%,25.3\%)$ for $\alpha \leq \pi ^{(1)}\mathbf{1}\approx 73\%$, although the split is such that the funds are far from being fully invested in stocks. We conclude that the welfare difference between the optimal strategy $\pi ^{\ast }$ on the one hand, and the heuristic strategies 
$\pi ^{(0)} = \pi ^{(1)}$ and $\pi ^{(2)}$ on the other hand, reflects the economic value of correct lifestyling strategy at 
\emph{low levels} of accumulated capital.

\begin{table}[tbp] \centering
\caption{Heuristic strategy $\pi^{(2)}(\alpha_t)$ as a function of $t$ and $W_{t}$
with $\gamma=8$.}
\medskip {\footnotesize 
\begin{tabular}{ccccccccccc}
\hline
$W_{t}$ & \multicolumn{2}{c}{$t=0$} & \multicolumn{2}{c}{$t=10$} & 
\multicolumn{2}{c}{$t=20$} & \multicolumn{2}{c}{$t=30$} & \multicolumn{2}{c}{$t=39.975$} \\ \hline  \vspace{-0.1in}\\ 
$10^{-5}$ & 0.747 & 0.253 & 0.747 & 0.253 & 0.747 & 0.253 & 0.747 & 0.253 & 0.747 & 0.253 \\
0.01  & 0.747 & 0.253 & 0.747 & 0.253 & 0.747 & 0.253 & 0.747 & 0.253 & 0.581 & 0.197 \\
0.05  & 0.747 & 0.253 & 0.747 & 0.253 & 0.747 & 0.253 & 0.747 & 0.253 & 0.553 & 0.188 \\
0.1   & 0.747 & 0.253 & 0.747 & 0.253 & 0.747 & 0.253 & 0.747 & 0.253 & 0.550 & 0.187 \\
0.2   & 0.747 & 0.253 & 0.747 & 0.253 & 0.747 & 0.253 & 0.747 & 0.253 & 0.548 & 0.186 \\
0.3   & 0.747 & 0.253 & 0.747 & 0.253 & 0.747 & 0.253 & 0.747 & 0.253 & 0.548 & 0.186 \\
0.5   & 0.747 & 0.253 & 0.747 & 0.253 & 0.747 & 0.253 & 0.747 & 0.253 & 0.547 & 0.186 \\
1     & 0.747 & 0.253 & 0.747 & 0.253 & 0.747 & 0.253 & 0.676 & 0.230 & 0.547 & 0.186 \\
2     & 0.747 & 0.253 & 0.723 & 0.246 & 0.670 & 0.228 & 0.611 & 0.208 & 0.547 & 0.186 \\
20    & 0.569 & 0.193 & 0.564 & 0.192 & 0.559 & 0.190 & 0.553 & 0.188 & 0.546 & 0.186\smallskip \\ \hline
\end{tabular}
}\label{tab: pi(2)RRA=8}
\end{table}

Let us now examine the case $\gamma =2$ whose optimal strategy is displayed in Table~\ref{tab: pi*RRA=2}. We show later in 
Subsection~\ref{SS:pi3} that for high values of cash in hand $W_{t}$ the optimal weights $\pi ^{\ast}(t,W_{t})$ tend to the expression
\begin{equation} \label{eq: pi(2)(1)}
\pi ^{(1)}=\frac{\hat{\pi}}{\gamma}+\zeta \min \left(1-\frac{\hat{\pi}\mathbf{1}}{\gamma},0\right),  
\end{equation}
where 
\begin{equation} \label{eq: zeta}
\zeta =\frac{\mathbf{1}^{\top }\Sigma ^{-1}}{\mathbf{1}^{\top }\Sigma ^{-1}\mathbf{1}}  
\end{equation}
is known as the minimum variance portfolio \citep[eq. 4.8]{ingersoll.87}. In the present example we have 
$\hat{\pi}=(437\%,$ $148\%)$, $\hat{\pi}\mathbf{1}=5.85$, and $\zeta =(95.3\%,\,4.7\%)$. Thus, as the risk aversion falls below $5.85$ there is a strong substitution away from bonds towards stocks. The substitution continues until the risk aversion reaches the level of 
$1.27=\hat{\pi}\mathbf{1}-\hat{\pi}_{1}/\zeta _{1}$ below which all accumulated savings are to be invested in stocks only.

For $\gamma =2$ the portfolio weights $\pi ^{(1)}=\pi ^{(2)}$ are fully invested in proportions 
$$\frac{\hat{\pi}}{2} - \zeta \left(\frac{5.85}{2}-1\right) = (34.9\%,65.1\%)$$ 
while the naive strategy $\pi ^{(0)}$ uses almost the opposite ratio 
$$\pi ^{(0)}=\frac{\hat{\pi}}{\hat{\pi}\mathbf{1}}=(74.7\%,25.3\%).$$ 
Therefore, in addition to the discrepancy between $\pi ^{\ast }$ and $\pi ^{(0)}$ for low values of $W_{t}$ which was present already for 
$\gamma=8$, $\pi ^{(0)}$ faces additional discrepancy of the portfolio weights for high level of accumulated savings. The combined effect makes the strategy $\pi ^{(0)}$ substantially suboptimal for low levels of risk aversion.

\begin{table}[tbp] \centering%
\caption{Optimal strategy $\pi^{*}(t,W_{t})$ as a function of $t$ and $W_{t}$ with $\gamma=2$.}
\medskip {\footnotesize 
\begin{tabular}{ccccccccccc}
\hline
$W_{t}$ & \multicolumn{2}{c}{$t=0$} & \multicolumn{2}{c}{$t=10$} & 
\multicolumn{2}{c}{$t=20$} & \multicolumn{2}{c}{$t=30$} & \multicolumn{2}{c}{$t=39.975$} \\ \hline  \vspace{-0.1in}\\ 
$10^{-5}$ & 0.000 & 1.000 & 0.000 & 1.000 & 0.000 & 1.000 & 0.000 & 1.000 & 0.000 & 1.000 \\
0.01  & 0.000 & 1.000 & 0.000 & 1.000 & 0.000 & 1.000 & 0.000 & 1.000 & 0.311 & 0.689 \\
0.05  & 0.000 & 1.000 & 0.000 & 1.000 & 0.000 & 1.000 & 0.000 & 1.000 & 0.342 & 0.659 \\
0.1   & 0.000 & 1.000 & 0.000 & 1.000 & 0.000 & 1.000 & 0.000 & 1.000 & 0.345 & 0.655 \\
0.2   & 0.000 & 1.000 & 0.000 & 1.000 & 0.000 & 1.000 & 0.000 & 1.000 & 0.347 & 0.653 \\
0.3   & 0.000 & 1.000 & 0.000 & 1.000 & 0.000 & 1.000 & 0.000 & 1.000 & 0.348 & 0.652 \\
0.5   & 0.000 & 1.000 & 0.000 & 1.000 & 0.000 & 1.000 & 0.077 & 0.923 & 0.348 & 0.652 \\
1     & 0.000 & 1.000 & 0.029 & 0.971 & 0.104 & 0.897 & 0.212 & 0.788 & 0.349 & 0.651 \\
2     & 0.147 & 0.853 & 0.180 & 0.820 & 0.225 & 0.775 & 0.281 & 0.720 & 0.349 & 0.651 \\
20    & 0.328 & 0.672 & 0.332 & 0.668 & 0.337 & 0.664 & 0.342 & 0.658 & 0.349 & 0.651\smallskip \\ \hline
\end{tabular}
}
\label{tab: pi*RRA=2}
\end{table}

\subsection{Near-optimal strategy $\pi ^{(3)}$}\label{SS:pi3} 
Previous subsection has highlighted that the optimal trading strategy $\pi^{\ast}$ substantially outperforms the strategy $\pi^{(0)}$ based on mechanical rescaling of fixed Samuelson's portfolio weights $\hat{\pi}$ and, to a lesser extent, also the heuristic strategies $\pi^{(1)}$ and $\pi^{(2)}$. This happens for two reasons: firstly, the relative mix of stocks and bonds in the optimal portfolio varies with the
value of the accumulated savings, moving progressively from stocks to bonds as the value of the savings increases over time. Secondly, for high savings levels the relative weights in stocks and bonds do depend on the risk aversion when risk aversion falls below the sum of credit-unconstrained weights $\hat{\pi}\mathbf{1}$. In this subsection we will examine the `lifestyling' phenomenon in more detail, with the view to providing an analytic approximation of the switching formula.

On inspection of the HJB PDE \eqref{eq: HJBbar}, one notes that the optimal portfolio is given by 
\begin{equation*}
\widebar{\pi }^{\ast }(t,\widebar{W}_{t})
=\underset{\widebar{\pi }\geq 0,\widebar{\pi }\mathbf{1}\leq \alpha _{t}}{\arg \max}\ \widebar{\pi }(\mu -r\mathbf{1})
-\frac{1}{2}\widebar{R}(t,\widebar{W}_{t})\widebar{\pi }\Sigma \widebar{\pi }^{\top},
\end{equation*}
where $\widebar{R}(t,\widebar{W}_{t})$ from equation \eqref{eq: Rbar} is the state-dependent coefficient of relative risk aversion of the indirect utility function and $\alpha _{t}=1-\mathrm{PV}_{t}/\widebar{W}_{t}$. From a purely engineering point of view it makes sense to examine the suboptimal strategy where we replace state-dependent value $\widebar{R}(t,\widebar{W}_{t})$ with the constant 
$\gamma = \widebar{R}(T,\widebar{W}_{T})$, 
\begin{equation}\label{eq: pibar(3)}
\widebar{\pi }^{(3)}(\alpha _{t})
=\arg \max_{\widebar{\pi }\geq 0,\widebar{\pi }\mathbf{1}\leq \alpha _{t}}\widebar{\pi }(\mu -r\mathbf{1})
-\frac{\gamma }{2}\widebar{\pi }\Sigma \widebar{\pi }^{\top } = \hat{\pi}(\alpha _{t},\gamma )=\alpha _{t}\hat{\pi}(1,\alpha _{t}\gamma ).
\end{equation}
In the world with contributions this strategy reads (see Eqs. \ref{eq: pi(3)intro} and \ref{eq: self similar})
\begin{equation}\label{eq: pi(3)}
\pi ^{(3)}(\alpha _{t})=\widebar{\pi }^{(3)}(\alpha _{t})/\alpha _{t}=\hat{\pi}(1,\alpha _{t}\gamma ).  
\end{equation}

The strategies $\pi ^{(i)},\widebar{\pi }^{(i)}$, $i\in\{0,1,2,3\}$ dispense with the need to solve a dynamic programming problem and leave us with a much simpler task of constrained quadratic programming. Whether $\widebar{\pi }^{(3)}$ is a good approximation to the optimal strategy $\widebar{\pi }^{\ast }$ now depends on how close the actual indirect risk aversion $\widebar{R}(t,\widebar{W}_{t})$ is to the fixed value $\gamma $.

\begin{table}[tbp] \centering
\caption{Welfare performance of strategies $\pi^{*}$ and $\pi^{(3)}$ for different levels of risk aversion.}\medskip\, 
\begin{tabular}{ccccc}
\hline
$\gamma $ & $\mathrm{CE}^{\ast }$ & $\mathrm{IRR}^{\ast }$ & $\mathrm{CE}^{(3)}$ & $\mathrm{IRR}^{(3)}$ \\ \hline
2 & {3.6501} & 5.50\% & {3.6496} & 5.50\% \\ 
5 & {2.1782} & 3.49\% & {2.1774} & 3.49\% \\ 
8 & {1.8164} & 2.74\% & {1.8161} & 2.74\% \\ \hline
\end{tabular}%
\label{tab: CE3}%
\end{table}

In Table~\ref{tab: CE3} one observes that the investment strategy $\pi^{(3)}$ is for all practical purposes indistinguishable from the fully
optimal investment $\pi ^{\ast }$ in terms of welfare. On inspection of the portfolio weights in Tables~\ref{tab: pi*RRA=8} and 
\ref{tab: pi(3)RRA=8}, we note the largest discrepancy between the two strategies occurs for $t=0$ at the savings level of $W=0.2$ (recall that $\mathrm{PV}_{0}=0.82$) and it amounts to about 6 percentage points shift towards stocks for the $\pi^{(3)} $ strategy. Thus the near-optimal weights $\pi^{(3)}$ tend to be slightly riskier than the fully optimal investment for middling savings levels. 

Generally speaking, the agreement between $\pi ^{\ast }$ and $\pi ^{(3)}$ is guaranteed to be excellent for very low and very high savings levels, since in the former case both strategies invest the entire cash in hand in stocks, while in the latter case we have already seen the optimal weights of both strategies tend to the value $\pi ^{(3)}(1) = \pi^{(2)}(1) = \pi^{(1)}$ given in \eqref{eq: pi(2)(1)}.

Recall that the heuristic strategies $\pi^{(1)}$, resp. $\pi^{(3)}$, are based on replacing $R(t,W_t)$, resp. $\widebar{R}(t,\widebar{W}_t)$, with $\gamma$. We observe numerically in Table~\ref{tab:Rbar2_8} that $\widebar{R}$ can deviate quite substantially from the constant value $\gamma$. Hence the superior performance of strategy $\pi^{(3)}$ over $\pi^{(1)}$ does not stem from $\widebar{R}$ being
closer to $\gamma$ than $R$ is. Instead, $\pi^{(3)}$ does so well because the largest discrepancy between $\widebar{R}$ and $\gamma$ occurs at low levels of $\alpha_t$ and here both strategies invest everything in stocks. From theory we know $R(t,W_t)\leq\gamma $ 
(see Eq. \ref{eq: rho leq gamma}), which translates to 
$$\widebar{R}(t,\widebar{W}_t)\leq \frac{\gamma}{\alpha_t}.$$ 
On the other hand, the numerical results in Table~\ref{tab:Rbar2_8} suggest $\gamma\leq\widebar{R}(t,\widebar{W}_t) $ for which no theoretical proof is available as yet.

\begin{table}[tbp] \centering
\caption{Near-optimal strategy $\pi^{(3)}(\alpha_t)$ as a function of $t$ and $W_t$ with $\gamma=8$. } \medskip 
{\footnotesize 
\begin{tabular}{ccccccccccc}
\hline
$W_{t}$ & \multicolumn{2}{c}{$t=0$} & \multicolumn{2}{c}{$t=10$} & 
\multicolumn{2}{c}{$t=20$} & \multicolumn{2}{c}{$t=30$} & \multicolumn{2}{c}{$t=39.975$} \\ \hline \vspace{-0.11in}\\ 
$10^{-5}$ & 0.000 & 1.000 & 0.000 & 1.000 & 0.000 & 1.000 & 0.000 & 1.000 & 0.000 & 1.000 \\
0.01  & 0.000 & 1.000 & 0.000 & 1.000 & 0.000 & 1.000 & 0.000 & 1.000 & 0.581 & 0.197 \\
0.05  & 0.000 & 1.000 & 0.000 & 1.000 & 0.000 & 1.000 & 0.084 & 0.916 & 0.553 & 0.188 \\
0.1   & 0.000 & 1.000 & 0.000 & 1.000 & 0.118 & 0.882 & 0.443 & 0.557 & 0.550 & 0.187 \\
0.2   & 0.180 & 0.820 & 0.313 & 0.687 & 0.460 & 0.540 & 0.622 & 0.378 & 0.548 & 0.186 \\
0.3   & 0.387 & 0.613 & 0.476 & 0.524 & 0.574 & 0.426 & 0.682 & 0.318 & 0.548 & 0.186 \\
0.5   & 0.553 & 0.447 & 0.606 & 0.394 & 0.665 & 0.335 & 0.730 & 0.270 & 0.547 & 0.186 \\
1     & 0.678 & 0.323 & 0.704 & 0.296 & 0.734 & 0.267 & 0.676 & 0.230 & 0.547 & 0.186 \\
2     & 0.740 & 0.260 & 0.723 & 0.246 & 0.670 & 0.228 & 0.611 & 0.208 & 0.547 & 0.186 \\
20    & 0.569 & 0.193 & 0.564 & 0.192 & 0.559 & 0.190 & 0.553 & 0.188 & 0.546 & 0.186\smallskip\\ \hline
\end{tabular}
}\label{tab: pi(3)RRA=8}
\end{table}

\begin{table}[tbp]
\caption{Values of $\protect\widebar{R}$ as a function of $t$ and $W_t$ for $\gamma\in\{2,8\}$.}\medskip
\setlength\tabcolsep{2pt}
    \begin{subtable}{.5\textwidth}
    \centering
		{\footnotesize  
		\begin{tabular*}{0.9\linewidth}{@{\extracolsep{\fill}}c c c c c c} \hline
     $W_t\ \backslash\ t$     & 0     & 10    & 20    & 30   & 30.9 \\  \hline \vspace{-0.14in}\\		
	$10^{-5}$ & 5.55  & 4.72  & 3.84  & 2.92  & 2.00 \\
			0.01  & 5.28  & 4.49  & 3.66  & 2.79  & 2.00 \\
			0.05  & 4.51  & 3.86  & 3.19  & 2.50  & 2.00 \\
			0.1   & 3.94  & 3.41  & 2.86  & 2.31  & 2.00 \\
			0.2   & 3.32  & 2.92  & 2.52  & 2.15  & 2.00 \\
			0.3   & 2.98  & 2.67  & 2.35  & 2.08  & 2.00 \\
			0.5   & 2.62  & 2.40  & 2.19  & 2.03  & 2.00 \\
			1     & 2.27  & 2.15  & 2.07  & 2.02  & 2.00 \\
			2     & 2.11  & 2.07  & 2.03  & 2.01  & 2.00 \\
			20    & 2.01  & 2.01  & 2.00  & 2.00  & 2.00 \\ \hline
        \end{tabular*}}
        \caption{$\gamma = 2$}
\label{subtable:gam2}
 \end{subtable}
   \begin{subtable}{.5\textwidth}
   \centering
    {\footnotesize    \begin{tabular*}{0.9\linewidth}{@{\extracolsep{\fill}}c c c c c c}\hline
    $W_t\ \backslash\ t$     & 0     & 10    & 20    & 30   & 30.9 \\  \hline \vspace{-0.14in}\\
    $10^{-5}$ & 13.10 & 11.97 & 10.75 & 9.42  & 8.01 \\
    0.01  & 12.42 & 11.36 & 10.22 & 8.99  & 8.00 \\
    0.05  & 10.68 & 9.85  & 8.98  & 8.22  & 8.00 \\
    0.1   & 9.52  & 8.90  & 8.41  & 8.14  & 8.00 \\
    0.2   & 8.72  & 8.48  & 8.25  & 8.06  & 8.00 \\
    0.3   & 8.55  & 8.35  & 8.16  & 8.03  & 8.00 \\
    0.5   & 8.33  & 8.19  & 8.07  & 8.01  & 8.00 \\
    1     & 8.11  & 8.05  & 8.01  & 8.00  & 8.00 \\
    2     & 8.01  & 8.00  & 8.00  & 8.00  & 8.00 \\
    20    & 8.00  & 8.00  & 8.00  & 8.00  & 8.00 \\ \hline
        \end{tabular*}}
             \caption{$\gamma = 8$}
    \label{subtable:gam8}
    \end{subtable}
		\label{tab:Rbar2_8}
\end{table}

Let us now take a closer look at formula \eqref{eq: pibar(3)}. By completing the square we have 
\begin{equation}\label{eq: pibar(3) LSQ}
\widebar{\pi }^{(3)}(\alpha )=\operatorname*{arg\,min}_{\pi \geq 0,\pi ^{\top }\mathbf{1}\leq \alpha }
\bigs\| \pi \sigma -\gamma^{-1}(\mu -r\mathbf{1})^{\top }\sigma^{-1}\bigs\|^{2}.  
\end{equation}
Since the expression on the right-hand side of \eqref{eq: pibar(3) LSQ} is strictly convex in $\pi$, those constraints in 
\eqref{eq: pibar(3) LSQ} that are not binding can be safely removed and the binding constraints applied with equality. Therefore, if some constraints in \eqref{eq: pibar(3) LSQ} are binding, \eqref{eq: pibar(3) LSQ} is equivalent to 
\begin{equation}
\widebar{\pi }^{(3)}(\alpha )=\operatorname*{arg\,max}_{A_{2}\pi^{\top}\!=\,b_{2}}\Vert
A_{1}\pi ^{\top }-b_{1}\Vert ^{2},  \label{cerny_min}
\end{equation}
where $A_{1}=\sigma ^{\top }$, $b_{1}=\sigma ^{-1}(\mu -r\mathbf{1})/\gamma $ and $A_{2}$, $b_{2}$ represent the binding constraints. Assuming that at least one constraint is binding, the solution of \eqref{cerny_min} is given in \citet[Corollary 4.2]{cerny.09} as 
\begin{equation}
\widebar{\pi }^{(3)}(\alpha )^\top  =A_{1}^{-1}b_{1}+(A_{1}^{\top
}A_{1})^{-1}A_{2}^{\top }(A_{2}(A_{1}^{\top }A_{1})^{-1}A_{2}^{\top
})^{-1}(b_{2}-A_{2}A_{1}^{-1}b_{1})\,.  \label{cerny_LA}
\end{equation}

Suppose that the only binding constraint in \eqref{eq: pi(3)} is $\pi \mathbf{1}=\alpha$. In this case 
$A_{2}=\mathbf{1}^\top=(1,1,\dots ,1)\in \mathbb{R}^{d}$, $b_{2}=\alpha $ and \eqref{cerny_LA} takes the form 
\begin{equation}\label{pialpha}
\widebar{\pi }^{(3)}(\alpha )=\frac{\hat{\pi}}{\gamma}+\zeta\left(\alpha -\frac{\hat{\pi}\mathbf{1}}{\gamma} \right),  
\end{equation}
where $\hat{\pi}$ from equation \eqref{eq: pihat1} represents the optimal unit risk-aversion weights without credit constraint and $\zeta$ from equation \eqref{eq: zeta} is the minimum variance portfolio.

Recall that in our numerical illustration the lifestyling correction vector takes the value $\zeta =(95.3\%,4.7\%)$. For high level of risk aversion $\gamma =8$ the constraint $\pi \mathbf{1}\leq \alpha $ becomes binding below $\hat{\alpha}=5.85/{8}\approx 73\%$. The optimal investment switches 100\% to stocks below $\alpha =15.7\%.$ For low level of risk aversion $\gamma =2$ the constraint 
$\pi \mathbf{1}\leq \alpha $ binds for \emph{all} values of $\alpha \in \lbrack 0,1]$ and the investment switches fully into stocks for all 
$\alpha $ below $63.4\%$. For $\gamma$ below $1.27=\hat{\pi}\mathbf{1}-\hat{\pi}_{1}/\zeta _{1}$ it is optimal to invest the entire cash in hand in stocks \emph{at all times}.

\subsection{Robustness analysis}\label{SS:robustness}

In this subsection we provide compelling evidence that the illustrative example of Subsections~\ref{SS:illustrative}--\ref{SS:pi3} is representative of general results for plausible parameter values. For this purpose, we consider 324 different parametrizations obtained as a 
$3\times 3\times 3\times 3\times 4$ Cartesian product of the following parameter values, 
\begin{subequations}
\label{eq: param}
\begin{align}
\mu _{1}& \in \{1.5\%,2\%,3\%\}, \\
\mu _{2}& \in \{7\%,10\%,13\%\}, \\
\sigma _{1}& \in \{3\%,5\%,7\%\}, \\
\sigma _{2}& \in \{20\%,25\%,30\%\}, \\
\rho & \in \{-20\%,-5\%,5\%,20\%\}.
\end{align}
\end{subequations}
The full set of results is available online in \citet{cerny.melichercik.19.online}. An aggregate summary is reported in 
Table~\ref{tab: robustness}.

We note that strategy $\pi ^{(3)}$ offers and excellent approximation of $\pi ^{\ast }$ across the board. Looking at the detailed results over the 324 individual parametrizations, we observe the largest discrepancies occur for $\rho =-0.2$ and high expected bond return 
$\mu_{1}=0.03$ in combination with low bond return volatility $\sigma _{1}=0.03$.

\begin{table}[tbp] \centering 
\caption{Summary of welfare performance of the optimal strategy $\pi^{*}$ relative to heuristic strategies $\pi^{(i)},\, i\in\{0,1,2,3\}$ over 324 model parametrizations specified in equations (\ref{eq: param}a--e).}\label{tab: robustness}
\medskip 
\begin{tabular}{crccrcrll}
\hline
\vspace*{-0.14in} &  &  &  &  &  &  &  & \\ 
$\gamma $ & \multicolumn{2}{c}{$\frac{\mathrm{CE}^{\ast }-\mathrm{CE}^{(0)}}{\mathrm{CE}^{\ast }}$} 
& \multicolumn{2}{c}{$\frac{\mathrm{CE}^{\ast }-\mathrm{CE}^{(1)}}{\mathrm{CE}^{\ast }}$}
& \multicolumn{2}{c}{$\frac{\mathrm{CE}^{\ast }-\mathrm{CE}^{(2)}}{\mathrm{CE}^{\ast }}$} 
& \multicolumn{2}{c}{$\frac{\mathrm{CE}^{\ast }-\mathrm{CE}^{(3)}}{\mathrm{CE}^{\ast }}$} \\ 
& \multicolumn{1}{c}{avg} & \multicolumn{1}{c}{max} & \multicolumn{1}{c}{avg} & \multicolumn{1}{c}{max} 
& \multicolumn{1}{c}{avg} & \multicolumn{1}{c}{max} & \multicolumn{1}{c}{avg} & \multicolumn{1}{c}{max}\smallskip \\ \hline
\vspace*{-0.14in} &  &  &  &  &  &  &  & \\ 
1 & 52.55\% & 87.78\% & 1.45\% & 6.55\% & 1.45\% & 6.55\% & 0.004\% & 0.083\% \\ 
2 & 34.40\% & 80.07\% & 5.73\% & 12.57\% & 5.52\% & 12.31\% & 0.03\% & 0.19\% \\ 
5 & 13.42\% & 49.17\% & 7.71\% & 14.50\% & 5.81\% & 13.69\% & 0.06\% & 0.39\% \\ 
8 & 8.49\%  & 31.57\% & 6.67\% & 14.52\% & 3.60\% & 12.98\% & 0.05\% & 0.37\%\smallskip \\ \hline
\end{tabular}
\end{table}

\section{Conclusions}\label{S:conclusions}

We have analyzed optimal investment for an individual pension savings plan. As a result of the plan's inability to borrow against future contributions the Samuelson paradigm of investment in constant proportions out of total wealth including current savings and present value of future contributions changes in two important respects. Firstly, for high levels of accumulated savings the relative investment in risky bonds and stocks becomes a function of investor's risk aversion, with strong substitution from bonds towards stocks for lower values of risk aversion. Secondly, for low levels of accumulated savings it becomes optimal to switch entirely to stocks, in an investment pattern known as 
stochastic lifestyling \citep{cairns.al.06}.

Since the computation of the fully optimal strategy is prohibitively technical for practitioners, we have put forward a near-optimal strategy
involving only a static constrained quadratic programme (CQP), easily implementable in a spreadsheet. This CQP strategy is shown to be practically indistinguishable from the optimal investment in terms of its welfare implications. We have provided an explicit formula 
\eqref{pialpha} which helps visualize the lifestyling effect and further lowers the technical barrier towards its implementation.

Three aspects of this research merit further investigation, in our view. As with any suboptimal strategy, it is desirable to have
explicit bounds on the degree of suboptimality. The information relaxation approach of \citet{brown.smith.14} is able to estimate the efficiency loss of suboptimal strategies when the optimal strategy is prohibitively expensive to compute. In our setting the optimal strategy is computationally feasible but perhaps the same approach can produce explicit error bounds. 

Secondly, we have observed in our numerical simulations that the indirect relative risk-aversion coefficient $\widebar{R}$ for the optimal strategy in the Samuelson world \eqref{eq: Rbar} satisfies $\widebar{R}\geq \gamma $, implying that the near-optimal strategy 
$\widebar{\pi}^{(3)}$ is more aggresive than the optimal strategy $\widebar{\pi}^*$. 
It is known from the
comparison principle for parabolic equations that in the world with contributions the corresponding indirect relative risk-aversion coefficient \eqref{eq: R} obeys $R\leq \gamma $, yielding $\widebar{R}\leq \gamma/\alpha_t$. A mathematical proof of 
$\widebar{R}\geq \gamma $ seems rather more elusive at present, cf. \citet{xia.11}. 

Last but not least, the near-optimality result has repercussions for the wider life-cycle portfolio allocation literature 
\citep{ayres.nalebuff.13} and deserves to be explored further in that context.   

\medskip

\noindent \textbf{Acknowledgments} We would like to thank two anonymous referees for their comments. The work of Ale\v{s} 
\v{C}ern\'{y} has been supported by the V\'{U}B Foundation grant `Visiting Professor 2011' . The work of Igor Melicher\v{c}\'{\i}k has been supported by VEGA 1/0251/16 project.

\appendix
\renewcommand{\thesection}{Appendix~\Alph{section}}
\section{Proofs}
\renewcommand{\thesection}{\Alph{section}}
\begin{lemma}
\label{lem: lambda_minus}
Let $\Sigma$ be a positive definite matrix in $\mathbb{R}^{d\times d}$. Under the assumption \eqref{eq: mu > r} function $\hat{\pi}$ from equation \eqref{eq: pihat def} satisfies 
\begin{equation}
0<\inf_{\rho \in (0,\gamma ]}\hat{\pi}(1,\rho )\Sigma \hat{\pi}(1,\rho
)^{\top }\text{.}  \label{eq: lambda minus}
\end{equation}
Moreover, for the investment strategies $\pi ^{(i)},i=0,1,2,3$ one has 
\begin{equation}
0<\inf_{(t,x)\in \lbrack 0,T)\times \mathbb{R}^{+}}\pi ^{(i)}(t,x)\Sigma \pi
^{(i)}(t,x)^{\top }\text{.}  \label{eq: estimate pi app}
\end{equation}
\end{lemma}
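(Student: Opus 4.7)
The plan is to establish \eqref{eq: lambda minus} first and then deduce \eqref{eq: estimate pi app} by inspecting the four heuristic strategies. Two preliminary observations will carry the argument: (a) the function $g(\rho):=f(1,\rho)$ is continuous on $[0,\infty)$, as the supremum of a jointly continuous objective over the compact set $A=\{\pi\in\mathbb{R}^{d}:\pi\geq 0,\pi\mathbf{1}\leq 1\}$; (b) under \eqref{eq: mu > r}, choosing $\pi=\epsilon e_i$ (for the index $i$ with $\mu_i>r$) yields $g(\rho)\geq\epsilon(\mu_i-r)-\frac{\rho}{2}\epsilon^{2}\Sigma_{ii}>0$ for every $\rho\geq 0$ and $\epsilon\in(0,1]$ sufficiently small, so $g>0$ everywhere and, for each $\rho>0$, strict concavity makes the unique maximizer $\hat{\pi}(1,\rho)$ non-zero.

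I would then argue \eqref{eq: lambda minus} by contradiction. If the infimum vanishes, positive definiteness of $\Sigma$ (via $\hat{\pi}\Sigma\hat{\pi}^{\top}\geq\lambda_{\min}(\Sigma)\|\hat{\pi}\|^{2}$) yields a sequence $\rho_n\in(0,\gamma]$ with $\hat{\pi}_n:=\hat{\pi}(1,\rho_n)\to 0$; passing to a subsequence, $\rho_n\to\rho^{*}\in[0,\gamma]$. If $\rho^{*}>0$, the local Lipschitz continuity of $\hat{\pi}(1,\cdot)$ at $\rho^{*}$, already invoked in the proof of Theorem~\ref{th: main} via Klatte's theorem, forces $\hat{\pi}(1,\rho^{*})=0$, contradicting (b). If $\rho^{*}=0$, both terms in $g(\rho_n)=\hat{\pi}_n(\mu-r\mathbf{1})-\frac{\rho_n}{2}\hat{\pi}_n\Sigma\hat{\pi}_n^{\top}$ vanish (since $\hat{\pi}_n\to 0$ and $\rho_n$ is bounded), so $g(\rho_n)\to 0$, contradicting the continuity of $g$ at $0$ together with $g(0)>0$ from (b).

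Bound \eqref{eq: estimate pi app} then follows strategy by strategy. The constant $\pi^{(0)}$ is non-zero because $(\mu-r\mathbf{1})^{\top}\Sigma^{-1}=0$ would force $\mu=r\mathbf{1}$, contrary to \eqref{eq: mu > r}; the constant $\pi^{(1)}=\hat{\pi}(1,\gamma)$ is non-zero by (b) at $\rho=\gamma$; both produce strictly positive quadratic forms by positive definiteness of $\Sigma$. The strategy $\pi^{(2)}(\alpha_t)=\pi^{(1)}/\max(\pi^{(1)}\mathbf{1},\alpha_t)$ is a positive scalar multiple of $\pi^{(1)}$ with multiplier at least $1$ (since $\pi^{(1)}\mathbf{1}\leq 1$ and $\alpha_t\leq 1$), so $\pi^{(2)}\Sigma\pi^{(2)\top}\geq\pi^{(1)}\Sigma\pi^{(1)\top}>0$ uniformly in $(t,x)$. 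Finally, $\pi^{(3)}(\alpha_t)=\hat{\pi}(1,\alpha_t\gamma)$, and the range $\alpha_t=x/(x+\mathrm{PV}_t)\in(0,1)$ for $(t,x)\in[0,T)\times\mathbb{R}^{+}$ (using $\mathrm{PV}_t>0$) places $\alpha_t\gamma$ inside $(0,\gamma)$, so \eqref{eq: lambda minus} supplies the required uniform lower bound.

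The only non-routine step is the boundary case $\rho^{*}=0$ in the contradiction: it is precisely there that \eqref{eq: mu > r} becomes indispensable, since without $g(0)>0$ the scenario $\hat{\pi}(1,\rho)\to 0$ as $\rho\to 0^{+}$ would be perfectly consistent and the infimum in \eqref{eq: lambda minus} would genuinely be zero.
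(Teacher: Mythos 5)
Your proof is correct, and its second half (the strategy-by-strategy verification of \eqref{eq: estimate pi app}) coincides with the paper's own treatment. Where you genuinely diverge is in the key inequality \eqref{eq: lambda minus}. You run a soft compactness argument: assuming the infimum vanishes, you extract a subsequence with $\hat{\pi}(1,\rho_n)\to 0$ and $\rho_n\to\rho^{*}\in[0,\gamma]$, then derive a contradiction from continuity of the value function $g$ on $[0,\infty)$ (supremum of a jointly continuous objective over the compact feasible set) together with pointwise positivity $g>0$, invoking Klatte's Lipschitz result in the interior case. The paper instead argues quantitatively and in one stroke: the test portfolio $\tilde{\pi}=\min\bigl((\mu_i-r)/(\gamma c_i),1\bigr)e_i$ achieves value at least $\delta=\min\bigl(\tfrac{1}{2}(\mu_i-r)^2/(\gamma c_i),\tfrac{1}{2}(\mu_i-r)\bigr)$ uniformly over $\rho\in(0,\gamma]$, while any feasible $\pi$ with $\pi\mathbf{1}\leq\varepsilon:=\tfrac{1}{2}\delta/\bigl(\mathbf{1}^{\top}|\mu-r\mathbf{1}|\bigr)$ achieves at most $\delta/2$; hence $\hat{\pi}(1,\rho)\mathbf{1}>\varepsilon$ uniformly in $\rho$, and positive definiteness of $\Sigma$ finishes. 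The paper's route buys explicit constants $\delta$ and $\varepsilon$ and needs neither continuity of $g$ at $\rho=0$ nor Klatte's theorem; yours buys a shorter conceptual path at the cost of heavier (though available, and elsewhere already used) machinery. One simplification of your own argument: the continuity-of-$g$ contradiction you deploy for $\rho^{*}=0$ works verbatim for any $\rho^{*}\in[0,\gamma]$, since $g(\rho_n)=\hat{\pi}_n(\mu-r\mathbf{1})-\tfrac{\rho_n}{2}\hat{\pi}_n\Sigma\hat{\pi}_n^{\top}\to 0$ while $g(\rho_n)\to g(\rho^{*})>0$; the case split and the appeal to Klatte are therefore unnecessary.
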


\begin{proof} Let $i$ be the index for which $\mu _{i}>r$. Let $c_{i}$ denote the $i$--th diagonal term of the matrix $\Sigma $ and define
\begin{equation*}
q_{\rho }(\pi )=\pi (\mu -r\mathbf{1})-\frac{\rho }{2}\pi \Sigma \pi ^{\top }.
\end{equation*}
Consider $\tilde{\pi}=(0,0,\dots ,\tilde{\pi}_{i},0,\dots ,0)$ with 
\begin{equation*}
\tilde{\pi}_{i}=\min \left( \frac{\mu _{i}-r}{\gamma c_{i}},1\right) >0.
\end{equation*}
For $\frac{\mu _{i}-r}{\gamma c_{i}}\leq 1$ we obtain 
\begin{equation*}
q_{\rho }(\tilde{\pi})=\frac{(\mu _{i}-r)^{2}}{\gamma c_{i}}-\frac{\rho }{2\gamma }\frac{(\mu _{i}-r)^{2}}{\gamma c_{i}}
\geq \frac{1}{2}\frac{(\mu_{i}-r)^{2}}{\gamma c_{i}}.
\end{equation*}
For $(\mu _{i}-r)/(\gamma c_{i})>1$ we have $\tilde{\pi}_{i}=1$ and therefore 
\begin{equation*}
q_{\rho }(\tilde{\pi})=(\mu _{i}-r)-\frac{1}{2}\rho c_{i}\geq (\mu _{i}-r)-\frac{1}{2}\gamma c_{i}\geq \frac{1}{2}(\mu _{i}-r).
\end{equation*}
From the above estimates one obtains
\begin{align*}
\inf_{0<\rho \leq \gamma }\left( \sup_{\tilde{\pi}\mathbf{1}\leq \pi \mathbf{1}\leq 1,\pi \geq 0}q_{\rho }\left( \pi \right) \right) 
&\geq \inf_{0<\rho \leq \gamma }q_{\rho }(\tilde{\pi}) \\
&\geq \min \left(\frac{1}{2}\frac{(\mu _{i}-r)^{2}}{\gamma c_{i}},\frac{1}{2}(\mu _{i}-r)\right) =\delta >0.
\end{align*}
On the other hand, setting $\varepsilon =\frac{1}{2}\frac{\delta }{\mathbf{1}^{\top }\left\vert \mu -r\mathbf{1}\right\vert }>0$ one obtains for all $\rho >0$ 
\begin{equation*}
\sup_{\pi \mathbf{1}\leq \varepsilon ,\pi \geq 0}q_{\rho }\left( \pi \right)
\leq \pi (\mu -r\mathbf{1})\leq \delta /2<\delta .
\end{equation*}
Therefore, arguing by contradiction, the optimal strategy verifies 
\begin{equation*}
\inf_{0<\rho \leq \gamma }\hat{\pi}(1,\rho )\mathbf{1}>\varepsilon ,
\end{equation*}
which in view of the assumed regularity of $\sigma $ guarantees \eqref{eq: lambda minus}. 

It remains to prove \eqref{eq: estimate pi app}. Recall $\pi ^{(0)}$ and $\pi ^{(1)}$ are constant and different from the zero vector therefore the result follows by positive definiteness of $\Sigma $. We have 
$$\pi ^{(2)}=\frac{\pi ^{(1)}}{\max (\pi ^{(1)}\mathbf{1},\alpha _{t})}$$
 and therefore in view of $\alpha (t,x)=x/(\mathrm{PV}_t+x)\leq 1$
\begin{align*}
0 &<\pi ^{(1)}\Sigma \pi ^{(1)\top }\leq \pi ^{(1)}\Sigma \pi ^{(1)\top}
\inf_{(t,x)\in \lbrack 0,T)\times \mathbb{R}_{+}}\frac{1}{\max (\pi ^{(1)}\mathbf{1},\alpha (t,x))} \\
&\leq \inf_{(t,x)\in \lbrack 0,T)\times \mathbb{R}^{+}}\pi^{(2)}(t,x)\Sigma \pi ^{(2)}(t,x)^{\top }.
\end{align*}
Finally, from \eqref{eq: pi(3)} recall $\pi ^{(3)}(t,x)=\hat{\pi}(1,\alpha(t,x)\gamma )$. Inequality \eqref{eq: estimate pi app} now follows from \eqref{eq: lambda minus} because $0\leq \alpha(t,x)\leq 1$.
\end{proof}
\begin{proposition}[\citealt{kilianova.sevcovic.13}]\label{prop: KS13}
Assume $g:\mathbb{R}_{+}\rightarrow \mathbb{R}$ is differentiable, its derivative is Lipschitz-continuous and satisfies inequality 
\eqref{eq: property g0}. Then the following statements hold. 
\begin{enumerate}[label={\rm\arabic{*})}, ref={\rm\arabic{*})}]
\item\label{KS13:1} The Cauchy problem
\begin{equation}\label{eq: PDE_rho}
\begin{split}
\partial _{t}\rho -\partial _{z}^{2}g(\rho )+\partial
_{z}[(y(t)e^{-z}+r)\rho -(1-\rho )g(\rho )] &=0,   \\
\rho (T,z) &=\gamma , 
\end{split}
\end{equation}
has a unique solution $\rho (t,z)$ in $\mathcal{C}^{1,2}\left( [0,T)\times 
\mathbb{R}\right) $. This solution satisfies 
\begin{equation}
0<\rho (t,z)\leq \gamma \text{ on }[0,T)\times \mathbb{R},
\label{eq: rho leq gamma}
\end{equation}
and it is H\"{o}lder-continuous of degree $H^{1+\lambda /2,2+\lambda }$ for
any $0<\lambda <\frac{1}{2}$.
\item\label{KS13:2} For $\rho $ from part~\ref{KS13:1} the linear PDE  
\begin{equation}\label{eq: PDE_u_from_rho}
\begin{split}
u_{t}+u_{z}\left( ye^{-z}+r+g\left( \rho \right) \right) &=0,
 \\
u\left( T,z\right) &=\frac{e^{(1-\gamma )z}}{1-\gamma }. 
\end{split}
\end{equation}
has a unique classical solution $u$.
\item\label{KS13:3} Function $u(t,z)$ from part~\ref{KS13:2} is the unique $\mathcal{C}^{1,2}\left([0,T)\times \mathbb{R}\right)$ solution of the Cauchy problem 
\begin{equation}\label{eq: HJBu}
\begin{split}
u_{t}+u_{z}\left( ye^{-z}+r+g\left( 1-\frac{u_{zz}}{u_{z}}\right) \right)&=0,   \\
u\left( T,z\right) &=\frac{e^{(1-\gamma )z}}{1-\gamma }. 
\end{split}
\end{equation}
\item\label{KS13:4} Conversely, if $u$ denotes the unique classical solution from item~\ref{KS13:3} then $\rho =1-u_{zz}/u_{z}$ is the unique classical solution of \eqref{eq: PDE_rho}.
\end{enumerate}
\end{proposition}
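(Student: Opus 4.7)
The plan is to follow the transformation strategy of Kilianov\'a and \v{S}ev\v{c}ovi\v{c}, treating Part 1 as the analytic core and Parts 2--4 as a bookkeeping exercise that links the quasilinear parabolic equation for $\rho$ to the fully nonlinear HJB equation \eqref{eq: HJBu}.

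\textbf{Part 1 (quasilinear Cauchy problem).} I would first rewrite \eqref{eq: PDE_rho} in non-divergence form by expanding $\partial_z^2 g(\rho) = g'(\rho)\rho_{zz}+g''(\rho)\rho_z^2$ (which is legitimate wherever $g'$ exists), obtaining an equation of the form $\rho_t = g'(\rho)\rho_{zz}+F(t,z,\rho,\rho_z)$ with $F$ locally Lipschitz in $(\rho,\rho_z)$ thanks to the Lipschitz continuity of $g'$. By \eqref{eq: property g0}, $-g'(\rho)\geq\delta>0$ on $(0,\gamma]$, so the equation is uniformly parabolic on that strip. The a priori bound $0<\rho\leq\gamma$ would come from the parabolic maximum principle: the constants $0$ and $\gamma$ are a sub- and a super-solution matching the terminal data, and the strong maximum principle rules out $\rho$ touching $0$ in the interior. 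Strict positivity of $-g'$ is what makes both the comparison argument and the existence theory work. Existence and uniqueness of a classical solution in $\mathcal{C}^{1,2}$ and H\"older regularity of class $H^{1+\lambda/2,2+\lambda}$ then follow from standard quasilinear parabolic theory as in \citet{ladyzhenskaya.al.68} or \citet{lieberman.96} once the coefficients are shown to be H\"older continuous in a neighbourhood of the solution.

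\textbf{Parts 2--4 (reduction and correspondence).} Given $\rho\in\mathcal{C}^{1,2}$ from Part 1, the first-order linear PDE \eqref{eq: PDE_u_from_rho} has $\mathcal{C}^1$ coefficients, so the method of characteristics yields a unique classical solution $u$ by integrating back from $t=T$ along characteristic curves determined by the drift $ye^{-z}+r+g(\rho)$. This proves Part 2. To prove Part 3, I would set $\tilde\rho:=1-u_{zz}/u_z$ and verify that $\tilde\rho$ solves the same Cauchy problem \eqref{eq: PDE_rho}: differentiating \eqref{eq: PDE_u_from_rho} once and twice in $z$, dividing by $u_z$, and substituting $u_{zz}=(1-\tilde\rho)u_z$ reproduces \eqref{eq: PDE_rho}; the terminal condition $\tilde\rho(T,z)=\gamma$ is a direct computation from $u(T,z)=e^{(1-\gamma)z}/(1-\gamma)$. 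Uniqueness in Part 1 then gives $\tilde\rho=\rho$, so $g(\rho)=g(1-u_{zz}/u_z)$ and $u$ solves \eqref{eq: HJBu}. Part 4 is the same identity read in reverse: any classical solution of \eqref{eq: HJBu} yields, via $\rho:=1-u_{zz}/u_z$, a classical solution of \eqref{eq: PDE_rho}, hence equal to the unique $\rho$ of Part 1; and $u$ is then the unique solution of the linear problem \eqref{eq: PDE_u_from_rho}.

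\textbf{Main obstacle.} The delicate point is Part 1, because strict parabolicity of \eqref{eq: PDE_rho} is only guaranteed while $\rho$ stays in $(0,\gamma]$, whereas the a priori bound that keeps $\rho$ in this range is itself a consequence of the maximum principle and therefore presupposes a well-defined parabolic solution. The standard way around this circularity is a truncation argument: extend $g$ smoothly outside $[0,\gamma]$ so as to preserve Lipschitz regularity of $g'$ and a uniform lower bound $-g'\geq \delta/2$ on all of $\mathbb{R}$, solve the resulting uniformly parabolic Cauchy problem by classical theory, derive $0<\rho\leq\gamma$ for that solution via comparison, and conclude that the truncation plays no role. Establishing global existence (rather than only on a short time interval) also requires uniform H\"older estimates on $\rho_z$, for which \eqref{eq: property g0} and the boundedness of $y$ on $[0,T]$ are the essential ingredients.
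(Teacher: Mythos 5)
Your sketch follows the same transformation strategy as the paper, whose entire proof of this proposition is the one-line citation ``Combine Theorems~3.3 and 5.2 and Proposition~3.4 in Kilianov\'a and \v{S}ev\v{c}ovi\v{c} (2013)''; your outline is a faithful reconstruction of what those cited results contain (truncation plus uniform parabolicity and comparison for the $\rho$-equation, then the $u\leftrightarrow\rho$ correspondence via $\rho=1-u_{zz}/u_{z}$). The only cautions are that $g'$ is merely Lipschitz, so the pointwise expansion $\partial_{z}^{2}g(\rho)=g'(\rho)\rho_{zz}+g''(\rho)\rho_{z}^{2}$ is not available classically and the equation should be kept in divergence form as in the reference, and that one must check $u_{z}>0$ (via the maximum principle applied to the positive terminal datum $e^{(1-\gamma)z}$) before dividing by it in Parts 3 and 4.
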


\begin{proof} 
Combine Theorems~3.3 and 5.2 and Proposition~3.4 in \citet{kilianova.sevcovic.13}.
\end{proof}
 

\begin{thebibliography}{26}
\expandafter\ifx\csname natexlab\endcsname\relax\def\natexlab#1{#1}\fi
\providecommand{\url}[1]{\texttt{#1}}
\providecommand{\href}[2]{#2}
\providecommand{\path}[1]{#1}
\providecommand{\DOIprefix}{}
\providecommand{\ArXivprefix}{arXiv:}
\providecommand{\URLprefix}{}
\providecommand{\Pubmedprefix}{pmid:}
\providecommand{\doi}[1]{\href{http://dx.doi.org/#1}{\path{#1}}}
\providecommand{\Pubmed}[1]{\href{pmid:#1}{\path{#1}}}
\providecommand{\bibinfo}[2]{#2}
\ifx\xfnm\relax \def\xfnm[#1]{\unskip,\space#1}\fi
\bibitem[{Ayres and Nalebuff(2013)}]{ayres.nalebuff.13}
\bibinfo{author}{Ayres, I.}, \bibinfo{author}{Nalebuff, B.J.},
  \bibinfo{year}{2013}.
\newblock \bibinfo{title}{Diversification across time}.
\newblock \bibinfo{journal}{Journal of Portfolio Management}
  \bibinfo{volume}{39}, \bibinfo{pages}{73--86}.
\bibitem[{Brown and Smith(2014)}]{brown.smith.14}
\bibinfo{author}{Brown, D.B.}, \bibinfo{author}{Smith, J.E.},
  \bibinfo{year}{2014}.
\newblock \bibinfo{title}{Information relaxations, duality, and convex
  stochastic dynamic programs}.
\newblock \bibinfo{journal}{Operations Research} \bibinfo{volume}{62},
  \bibinfo{pages}{1394--1415}.
\newblock   \DOIprefix\doi{10.1287/opre.2014.1322}.
\bibitem[{Cairns et~al.(2006)Cairns, Blake and Dowd}]{cairns.al.06}
\bibinfo{author}{Cairns, A.J.G.}, \bibinfo{author}{Blake, D.},
  \bibinfo{author}{Dowd, K.}, \bibinfo{year}{2006}.
\newblock \bibinfo{title}{Stochastic lifestyling: optimal dynamic asset
  allocation for defined contribution pension plans}.
\newblock \bibinfo{journal}{Journal of Economic Dynamics \& Control}
  \bibinfo{volume}{30}, \bibinfo{pages}{843--877}.
\newblock   \DOIprefix\doi{10.1016/j.jedc.2005.03.009}.
\bibitem[{{\v{C}}ern\'{y}(2009)}]{cerny.09}
\bibinfo{author}{{\v{C}}ern\'{y}, A.}, \bibinfo{year}{2009}.
\newblock \bibinfo{title}{Characterization of the oblique projector
  {$U(VU)^\dagger V$} with application to constrained least squares}.
\newblock \bibinfo{journal}{Linear Algebra and Its Applications} \bibinfo{volume}{431},
  \bibinfo{pages}{1564--1570}.
\newblock \DOIprefix\doi{10.1016/j.laa.2009.05.025}.
\bibitem[{{\v{C}}ern\'{y} and
  Melicher\v{c}\'{i}k(2019)}]{cerny.melichercik.19.online}
\bibinfo{author}{{\v{C}}ern\'{y}, A.}, \bibinfo{author}{Melicher\v{c}\'{i}k,
  I.}, \bibinfo{year}{2019}.
\newblock \bibinfo{title}{Table of robustness results}.
\newblock
  \bibinfo{howpublished}{\url{https://www.martingales.sk/cerny_melichercik_robustness.html}}.
\newblock \bibinfo{note}{{A}ccessed: 2019-11-21}.
\bibitem[{Fleming and Soner(2006)}]{fleming.soner.06}
\bibinfo{author}{Fleming, W.H.}, \bibinfo{author}{Soner, H.M.},
  \bibinfo{year}{2006}.
\newblock \bibinfo{title}{Controlled {M}arkov processes and viscosity
  solutions}. volume~\bibinfo{volume}{25} of
  \textit{\bibinfo{series}{Stochastic Modelling and Applied Probability}}.
\newblock \bibinfo{edition}{2nd} ed., \bibinfo{publisher}{Springer, New York}.
\bibitem[{Geman et~al.(1995)Geman, El~Karoui and Rochet}]{geman.al.95}
\bibinfo{author}{Geman, H.}, \bibinfo{author}{El~Karoui, N.},
  \bibinfo{author}{Rochet, J.C.}, \bibinfo{year}{1995}.
\newblock \bibinfo{title}{Changes of num\'{e}raire, changes of probability
  measure and option pricing}.
\newblock \bibinfo{journal}{Journal of Applied Probability}
  \bibinfo{volume}{32}, \bibinfo{pages}{443--458}.
\newblock \DOIprefix\doi{10.2307/3215299}.
\bibitem[{Hakansson(1970)}]{hakansson.70}
\bibinfo{author}{Hakansson, N.H.}, \bibinfo{year}{1970}.
\newblock \bibinfo{title}{Optimal investment and consumption strategies under
  risk for a class of utility functions}.
\newblock \bibinfo{journal}{Econometrica} \bibinfo{volume}{38},
  \bibinfo{pages}{587--607}.
\newblock \DOIprefix \doi{10.2307/1912196}.
\bibitem[{Ingersoll(1987)}]{ingersoll.87}
\bibinfo{author}{Ingersoll, J.E.}, \bibinfo{year}{1987}.
\newblock \bibinfo{title}{Theory of Financial Decision Making}.
\newblock Studies in Financial Economics, \bibinfo{publisher}{Rowman \&
  Littlefield}, \bibinfo{address}{Savage}.
\bibitem[{Jacod and Shiryaev(2003)}]{js.03}
\bibinfo{author}{Jacod, J.}, \bibinfo{author}{Shiryaev, A.N.},
  \bibinfo{year}{2003}.
\newblock \bibinfo{title}{Limit Theorems for Stochastic Processes}. volume
  \bibinfo{volume}{288} of \textit{\bibinfo{series}{Comprehensive Studies in
  Mathematics}}.
\newblock \bibinfo{edition}{2nd} ed., \bibinfo{publisher}{Springer-Verlag,
  Berlin}.
\bibitem[{Karatzas and Kardaras(2007)}]{karatzas.kardaras.07}
\bibinfo{author}{Karatzas, I.}, \bibinfo{author}{Kardaras, C.},
  \bibinfo{year}{2007}.
\newblock \bibinfo{title}{The num\'eraire portfolio in semimartingale financial
  models}.
\newblock \bibinfo{journal}{Finance \& Stochastics} \bibinfo{volume}{11},
  \bibinfo{pages}{447--493}.
\newblock \DOIprefix\doi{10.1007/s00780-007-0047-3}.
\bibitem[{Kilianov{\'a} and
  {\v{S}}ev{\v{c}}ovi{\v{c}}(2013)}]{kilianova.sevcovic.13}
\bibinfo{author}{Kilianov{\'a}, S.},
  \bibinfo{author}{{\v{S}}ev{\v{c}}ovi{\v{c}}, D.}, \bibinfo{year}{2013}.
\newblock \bibinfo{title}{A transformation method for solving the
  {H}amilton-{J}acobi-{B}ellman equation for a constrained dynamic stochastic
  optimal allocation problem}.
\newblock \bibinfo{journal}{ANZIAM Journal} \bibinfo{volume}{55},
  \bibinfo{pages}{14--38}.
\newblock \DOIprefix\doi{10.1017/S144618111300031X}.
\bibitem[{Klatte(1985)}]{klatte.85}
\bibinfo{author}{Klatte, D.}, \bibinfo{year}{1985}.
\newblock \bibinfo{title}{On the {L}ipschitz behavior of optimal solutions in
  parametric problems of quadratic optimization and linear complementarity}.
\newblock \bibinfo{journal}{Optimization} \bibinfo{volume}{16},
  \bibinfo{pages}{819--831}.
\newblock \DOIprefix\doi{10.1080/02331938508843080}.
\bibitem[{Ladyzhenskaya et~al.(1968)Ladyzhenskaya, Solonnikov and
  Uraltseva}]{ladyzhenskaya.al.68}
\bibinfo{author}{Ladyzhenskaya, O.A.}, \bibinfo{author}{Solonnikov, V.A.},
  \bibinfo{author}{Uraltseva, N.N.}, \bibinfo{year}{1968}.
\newblock \bibinfo{title}{Linear and quasilinear equations of parabolic type}.
\newblock Translated from the Russian by S. Smith. Translations of Mathematical
  Monographs, Vol. 23, \bibinfo{publisher}{American Mathematical Society,
  Providence, R.I.}
\bibitem[{Lieberman(1996)}]{lieberman.96}
\bibinfo{author}{Lieberman, G.M.}, \bibinfo{year}{1996}.
\newblock \bibinfo{title}{Second order parabolic differential equations}.
\newblock \bibinfo{publisher}{World Scientific Publishing Co., Inc., River
  Edge, NJ}.
\newblock \DOIprefix\doi{10.1142/3302}.
\bibitem[{Milgrom and Segal(2002)}]{milgrom.segal.02}
\bibinfo{author}{Milgrom, P.}, \bibinfo{author}{Segal, I.},
  \bibinfo{year}{2002}.
\newblock \bibinfo{title}{Envelope theorems for arbitrary choice sets}.
\newblock \bibinfo{journal}{Econometrica} \bibinfo{volume}{70},
  \bibinfo{pages}{583--601}.
\newblock \DOIprefix \doi{10.1111/1468-0262.00296}.
\bibitem[{Mulvey et~al.(2008)Mulvey, Simsek, Zhang, Fabozzi and
  Pauling}]{mulvey.al.08}
\bibinfo{author}{Mulvey, J.M.}, \bibinfo{author}{Simsek, K.D.},
  \bibinfo{author}{Zhang, Z.}, \bibinfo{author}{Fabozzi, F.J.},
  \bibinfo{author}{Pauling, W.R.}, \bibinfo{year}{2008}.
\newblock \bibinfo{title}{Assisting defined-benefit pension plans}.
\newblock \bibinfo{journal}{Operations Research} \bibinfo{volume}{56},
  \bibinfo{pages}{1066--1078}.
\newblock \DOIprefix\doi{10.1287/opre.1080.0526}.
\bibitem[{Nutz(2012)}]{nutz.12.mf}
\bibinfo{author}{Nutz, M.}, \bibinfo{year}{2012}.
\newblock \bibinfo{title}{Power utility maximization in constrained exponential
  {L}\'evy models}.
\newblock \bibinfo{journal}{Mathematical Finance} \bibinfo{volume}{22},
  \bibinfo{pages}{690--709}.
\newblock \DOIprefix\doi{10.1111/j.1467-9965.2011.00480.x}.
\bibitem[{Samuelson(1969)}]{samuelson.69}
\bibinfo{author}{Samuelson, P.}, \bibinfo{year}{1969}.
\newblock \bibinfo{title}{Lifetime portfolio selection by dynamic stochastic
  programming}.
\newblock \bibinfo{journal}{The Review of Economics and Statistics}
  \bibinfo{volume}{51}, \bibinfo{pages}{239--246}.
\newblock \DOIprefix \doi{10.2307/1926559}.
\bibitem[{Sodhi(2005)}]{sodhi.05}
\bibinfo{author}{Sodhi, M.S.}, \bibinfo{year}{2005}.
\newblock \bibinfo{title}{Lp modeling for asset-liability management: A survey
  of choices and simplifications}.
\newblock \bibinfo{journal}{Operations Research} \bibinfo{volume}{53},
  \bibinfo{pages}{181--196}.
\newblock \DOIprefix\doi{10.1287/opre.1040.0185}.
\bibitem[{Vila and Zariphopoulou(1997)}]{vila.zariphopoulou.97}
\bibinfo{author}{Vila, J.L.}, \bibinfo{author}{Zariphopoulou, T.},
  \bibinfo{year}{1997}.
\newblock \bibinfo{title}{Optimal consumption and portfolio choice with
  borrowing constraints}.
\newblock \bibinfo{journal}{Journal of Economic Theory} \bibinfo{volume}{77},
  \bibinfo{pages}{402--431}.
\newblock \DOIprefix \doi{10.1006/jeth.1997.2285}.
\bibitem[{Xia(2011)}]{xia.11}
\bibinfo{author}{Xia, J.}, \bibinfo{year}{2011}.
\newblock \bibinfo{title}{Risk aversion and portfolio selection in a
  continuous-time model}.
\newblock \bibinfo{journal}{SIAM Journal on Control and Optimization}
  \bibinfo{volume}{49}, \bibinfo{pages}{1916--1937}.
\newblock \DOIprefix \doi{10.1137/10080871X}.
\bibitem[{Zariphopoulou(1994)}]{zariphopoulou.94}
\bibinfo{author}{Zariphopoulou, T.}, \bibinfo{year}{1994}.
\newblock \bibinfo{title}{Consumption-investment models with constraints}.
\newblock \bibinfo{journal}{{SIAM} Journal on Control and Optimization}
  \bibinfo{volume}{32}, \bibinfo{pages}{59--85}.
\newblock \DOIprefix\doi{10.1137/S0363012991218827}.
\bibitem[{Zhang and Ewald(2010)}]{zhang.ewald.10}
\bibinfo{author}{Zhang, A.}, \bibinfo{author}{Ewald, C.O.},
  \bibinfo{year}{2010}.
\newblock \bibinfo{title}{Optimal investment for a pension fund under inflation
  risk}.
\newblock \bibinfo{journal}{Mathematical Methods of Operations Research}
  \bibinfo{volume}{71}, \bibinfo{pages}{353--369}.
\newblock \DOIprefix \doi{10.1007/s00186-009-0294-5}.

\end{thebibliography}
\end{document}